\definecolor{myblue}{RGB}{0 0 128}
\newcommand{\boldr}{\boldsymbol{r}}
\newcommand{\bolda}{\boldsymbol{a}}
\newcommand{\boldw}{\boldsymbol{w}}
\newcommand{\boldh}{\boldsymbol{h}}
\newcommand{\boldl}{\boldsymbol{\ell}}
\newcommand{\boldz}{\boldsymbol{z}}
\newcommand{\boldW}{\boldsymbol{W}}
\newcommand{\boldc}{\boldsymbol{c}}
\newcommand{\boldsigma}{\boldsymbol{\sigma}}
\newcommand{\boldzeta}{\boldsymbol{\zeta}}
\newcommand{\boldLambda}{\boldsymbol{\Lambda}}
\newcommand{\boldrho}{\boldsymbol{\rho}}
\newcommand{\boldvarrho}{\boldsymbol{\varrho}}
\newcommand{\vech}{\operatorname{vech}}
\newtheorem{proposition}{Proposition}
\newtheorem{assumption}{Assumption}
\newtheorem*{remark}{Remark}
\title{\textbf{Spot Regressions with Candlesticks}\footnote{This paper is built on the third chapter of my PhD dissertation at Duke University. I am very grateful to Andrew J. Patton, Tim Bollerslev and Anna Bykhovskaya for their guidance and support. I would also like to thank Anna Cieslak, Mehmet Caner, Peter R. Hansen, Ilze Kalnina, Raul Guarini Riva, Adam Rosen, George Tauchen, Christopher Walker and seminar participants at Duke University, 2025 Triangle Econometrics Conference, and 2025 Annual Meeting of SoFiE for their helpful comments and suggestions. Email: \textcolor{blue}{\href{mailto:yasin.simsek@duke.edu}{yasin.simsek@duke.edu}}.}}
\author{\vspace{0.5cm}Yasin Simsek}
\affil{\textit{\small Department of Economics, Duke University}}
\date{\today}
\begin{document}

\maketitle

\doublespacing

\begin{abstract}
\noindent 
Betas from spot regressions are central to asset pricing and risk management, as measures of systematic risk. This paper develops a new estimation and inference framework for spot regressions by leveraging high-frequency candlesticks, extending conventional (open-to-close) returns with intra-period high/low prices. Specifically, I construct candlestick-based estimators of regression parameters, including spot beta, by minimizing a quadratic risk under a fixed-k asymptotic framework. I then develop a feasible hypothesis testing procedure for spot betas with correct asymptotic size. Simulation results show that the proposed estimator reduces estimation risk relative to return-based estimators, especially in small samples, and the test achieves notably higher power. I apply the framework to assess the market neutrality of Bitcoin using 1-minute data on IBIT and SPY, finding deviations from neutrality, particularly in high-volatility periods.
\medskip

\noindent\textbf{Keywords:} High-frequency data, spot beta, spot covariance, candlesticks.

\noindent \textbf{JEL Codes:} C13, C32, C58, G11.
\end{abstract}

\newpage
\section{Introduction}
Time-series return regressions are widely employed in empirical finance to measure an asset's exposure to risk factors, most prominently the market portfolio. The slope coefficients of these regressions are commonly referred to as betas and they are central to asset pricing and risk management research. The conventional approach for running these regressions relies on low-frequency returns (e.g., daily or weekly) over long horizons (e.g., months or years), thereby treating betas as constants throughout the estimation window. However, a substantial body of evidence shows that risk exposures are time-varying; see for example \cite{engle2004risk}. High-frequency intraday returns provide an effective means for capturing this variation and have enabled more precise estimation over shorter horizons, such as a day or a month; see for example \cite{ait2014high} and references therein\footnote{Modeling time variation in betas is not limited to high-frequency methods and researchers have developed various approaches over time. One strand of the literature, for example, captures the time variation by modelling betas as parametric functions of conditioning variables (e.g., \cite{connor2012efficient,gagliardini2016time}) while another strand adopts nonparametric methods (e.g., \cite{ang2012testing}).}. More recently, a growing literature documents that betas vary significantly even within a single trading day (e.g., \cite{bibinger2019estimating, andersen2021recalcitrant, andersen2023intraday, liao2024changes,patton2024intraday}). These findings motivate the estimation of risk exposures over very short intraday windows, commonly known as spot regressions. Since such spot estimates are constructed from returns observed over narrow time intervals, they are subject to limited local information, giving rise to a fundamental bias-variance trade-off between statistical precision and localization. 


Building on this background, I develop a novel estimation framework for spot regressions using a new source of information to mitigate the localization-precision trade-off. Specifically, my approach leverages the richer information in high-frequency candlesticks that contain the open, high, low, and close prices within each sampling interval, moving beyond the conventional methods that rely solely on (open-to-close) returns. Importantly, this work provides a practical way to obtain more precise estimates and reliable inference, while preserving the local structure of the estimation.

I first construct a spot covariance estimator that exploits the second moments of candlestick variables and then derive estimates for regression parameters, including spot beta, implied by this estimator. The functional form of this estimator is obtained by minimizing a well-defined quadratic risk function. The resulting expression is analytically tractable and resembles the least squares formula. These features render the estimator readily applicable across settings without requiring additional econometric procedures. Importantly, my approach treats the size of the estimation window (equivalently, the number of observations in the estimation sample) as a fixed, possibly small, constant, thereby accounting for the scarcity of local information. I further show that the proposed estimator substantially reduces risk relative to standard estimators that rely solely on return observations, particularly in small samples. These properties make the approach especially well suited to high-frequency event studies, which require highly local estimates for identification (e.g., \cite{bollerslev2018modeling}; \cite{nakamura2018high}).\footnote{Specifically, \cite{nakamura2018high} exploit high-frequency bond returns within a short window around FOMC announcements to identify the effects of monetary policy shocks, while \cite{bollerslev2018modeling} estimate investor disagreement using local jump regressions around news announcements.}

 
Having established the candlestick-based estimator, I further develop a formal inference procedure on spot betas using the proposed estimator. While recent studies have investigated candlestick-based inference on spot volatility functionals (e.g., \cite{li2024reading,bollerslev2024optimal}), inference methods for spot betas remain underexplored. To address this gap, I propose a hypothesis test and associated test statistic. I show that this test statistic can be approximated by a limiting variable whose distribution under the null hypothesis can be characterized via simulations. Accordingly, I compute the critical values by simulating the quantiles of that limiting variable. The resulting procedure delivers a test with asymptotically correct size. A finite-sample simulation study further demonstrates that the test exhibits reasonable size control and achieves substantially higher power than its return-based counterpart (\cite{bollerslev2024optimalspotregressions}).

Finally, I demonstrate the practical value of the proposed framework through an empirical application studying the market neutrality of Bitcoin. Cryptocurrencies are often framed as ``digital gold'' by their advocates, suggesting potential hedging properties. Similarly, \cite{liu2021risks} find limited evidence of systematic exposure of crypto assets to market risk. Motivated by these observations, I test the null hypothesis of market neutrality (or zero market beta) for Bitcoin using my candlestick-based estimator and test. Specifically, I use high-frequency candlestick observations of the iShares Bitcoin ETF (IBIT) and the SPDR S\&P 500 ETF (SPY). The analysis uncovers striking patterns with potential implications for risk management. I reject the null approximately 35\% of the time, revealing pricing dynamics that differ markedly from those documented in prior work, such as \cite{liu2021risks}. Notably, rejection rates peak around August and September 2024, coinciding with elevated financial market volatility. This is precisely when the diversification and hedging benefits of Bitcoin would be most valuable.

In an attempt to improve the estimation, one could alternatively employ tick-level data which provides the most granular information, as it records every single transaction in the market usually at ultra-high frequencies (e.g., milliseconds). However, using tick-level data faces important limitations. First, this data is accessible only to well-resourced researchers since it requires costly subscriptions through commercial providers such as NYSE Trade and Quotes (TAQ) or TickData. Furthermore, prices sampled at such ultra-high frequencies are invariably contaminated by market microstructure noise, which necessitates imposing additional modeling assumptions on the noise structure to obtain enhanced estimates, see for example \cite{diebold2013correlation}. By contrast, candlestick data is widely accessible through many different public sources at ``not-too-fine'' frequencies (e.g., 1-minute or 5-minute), which naturally guards against the impact of microstructure noise. From this perspective, my paper adopts a more practical and accessible approach for improving spot regression estimation.\footnote{Within the high-frequency econometrics literature, researchers have developed models for market microstructure noise; see, for example, \cite{hayashi2005covariance, ait2010high, christensen2010pre, barndorff2011multivariate}. While such approaches may be relevant in this context, their incorporation into the present framework is not straightforward. I leave a careful investigation of these issues for future research.}

This paper contributes to multiple strands of the literature. From a technical perspective, this paper is closely related to the high-frequency econometrics literature on beta estimation, see for example \cite{barndorff2004econometric,mykland2009inference,li2017adaptive,ait2020high,bollerslev2024optimalspotregressions} among many others. These papers develop estimators primarily based on high-frequency returns from open and close prices. My work complements this literature by introducing a new class of estimators that leverage candlestick observations, which expands the information set of high-frequency intervals with high and low prices. 

My work is inspired by the range-based volatility estimation literature. Starting from seminal papers by \cite{garman1980estimation} and \cite{parkinson1980extreme}, this literature highlights remarkable efficiency gains in estimating variances by extracting more information from candlestick prices. Notably, \cite{christensen2007realized} introduces the realized-range estimator for integrated variance, constructed by high-frequency ranges. Later, it is extended to be robust to microstructure noise (\cite{martens2007measuring, christensen2009bias}); jumps (\cite{christensen2012asymptotic}) and drifts (\cite{li2025realized}). More recently, several papers study spot volatility estimation using candlesticks, see for example \cite{li2024reading}, \cite{bollerslev2024optimal} and \cite{bollerslev2024newtricks}. The main focus of these papers is to derive optimal estimation and inference frameworks for spot volatility and its functionals. This paper complements this literature by extending these ideas to multivariate settings which allows us to study dependence structure between assets with this data.

This paper is perhaps most closely related to \cite{bollerslev2024optimal} which develops a decision-theoretic framework to construct optimal estimators for volatility functionals. In their approach, the risk function is asymptotically approximated using pivotal random variables, and the asymptotic risk is minimized through Monte Carlo simulations. However, in multivariate settings, the limiting distributions are generally non-pivotal, which makes direct minimization of a traditional risk function infeasible. To overcome this challenge, I integrate the asymptotic risk to obtain an average risk function which is pivotal, allowing me to determine the optimal estimators by minimizing this objective. Importantly, this approach delivers a feasible solution to an otherwise intractable problem while delivering superior efficiency in both estimation and inference. 

Prior to this study, \cite{brandt2006no} and \cite{bannouh2009range} introduced range-based covariance estimators that exploit triangular no-arbitrage conditions in currency markets. While effective in that setting, this approach is generally not applicable to equity markets. Similarly, \cite{rogers2008estimating} estimated the correlation coefficient of a bivariate Brownian motion using open, high, low and close price observations. However, their analysis is conducted under a constant volatility framework. In contrast, this paper accommodates a more general Itô semimartingale environment and provides a feasible inference procedure with candlesticks which is not considered in the abovementioned papers.

The remainder of the paper is organized as follows. Section \ref{sec:setup} introduces the theoretical environment and notation. Section \ref{sec:estimation} develops candlestick-based estimation framework. Section \ref{sec:inference} discusses the inference procedure. Section \ref{sec:simulation} presents simulation results, and Section \ref{sec:empirics} provides the empirical application. Finally, Section \ref{sec:conclusion} concludes. Additional technical details and all proofs are provided in the appendices.


\section{Setup and Notation} \label{sec:setup}

\subsection{The Price Process} \label{sec:price_process}
Consider the bivariate log-price process $\boldsymbol{X}_t = [X_{1,t},X_{2,t}]^{\top}$, observed at time $t$. Suppose that $\boldsymbol{X}_t$ evolves as an It\^o semimartingale on the filtered probability space $(\Omega, \mathcal{F}, (\mathcal{F}_t), \mathbb{P})$, according to
\begin{equation} \label{eq:itosemi}
    d\boldsymbol{X}_t = \boldsymbol{b}_t dt + \boldsymbol{\sigma}_t d\boldsymbol{W}_t,
\end{equation}
where $\boldsymbol{b}_t$ is the drift process, $\boldsymbol{\sigma}_t$ is stochastic volatility matrix and $\boldsymbol{W}_t\equiv [W_{1,t}, W_{2,t}]^{\top}$ is a bivariate standard Brownian motion. 

The spot covariance matrix, $\boldsymbol{c}_t$, and associated correlation matrix, $\boldsymbol{\rho}_t$, at time $t$ are defined as:
\begin{equation}
    \boldsymbol{c}_t \equiv \boldsymbol{\sigma}_t\boldsymbol{\sigma}_t^{\top} = \begin{pmatrix}
        c_{11,t} & c_{12,t} \\
        c_{21,t} & c_{22,t}
    \end{pmatrix} \quad \text{and} \quad
    \boldsymbol{\rho}_t \equiv \begin{pmatrix}
        1 & \rho_t \\
        \rho_t & 1
    \end{pmatrix},
\end{equation}
where $\rho_t = c_{12,t}/\sqrt{c_{11,t}c_{22,t}}$ is the spot correlation between $X_{1,t}$ and $X_{2,t}$.

The It\^o-semimartingale representation in Equation \eqref{eq:itosemi} can be motivated by no-arbitrage conditions and therefore serves as a workhorse framework in the continuous-time finance literature, see for example \cite{back2010asset}. Consequently, it has become the fundamental model for analyzing high-frequency asset prices; see \cite{jacod2012discretization} and \cite{ait2014high} for further discussions. For simplicity, I exclude jumps from the price process.\footnote{Standard jump-robust techniques such as truncation (\cite{mancini2009non}) or bi-power variation (\cite{barndorff2004power}) methods can in principle be adapted to this paper to explicitly account for price discontinuities.}

I further tailor the price process $\boldsymbol{X}_t$ to the following regression representation:
\begin{equation} \label{eq:reg_rep}
\begin{array}{rcl} 
    d {X}_{1,t} &=&  b_{1,t} + \nu_t^{1/2} dW_{1,t} \\
    d {X}_{2,t} &=&  b_{2,t} + \beta_t d {X}_{1,t} + \varsigma^{1/2}_t dW_{2,t}.
\end{array}
\end{equation}
This is also equivalent to imposing the following spot covariance structure:
\begin{equation} \label{eq:regression_covariance}
    \boldc_t = \begin{pmatrix}
        \nu_{t} & \beta_t \nu_{t} \\
        \beta_t \nu_{t}  & \beta_t^2 \nu_{t} + \varsigma_t
    \end{pmatrix} \quad \text{and} \quad
    \boldsymbol{\sigma}_t = \begin{pmatrix}
        \nu_t^{1/2} & 0 \\
        \beta_t \nu_t^{1/2} & \varsigma_t^{1/2}
    \end{pmatrix}.
\end{equation}
Through the lens of factor models in asset pricing literature (e.g., \cite{sharpe1964capital, lintner1965security,fama1992cross}), one can consider the first asset as the market portfolio and the second asset as a risky asset. Then, $\nu_t$ and $\varsigma_t$ refer to the market and idiosyncratic variances, respectively. Finally, $\beta_t$ denotes the market beta of the risky asset which is the main object of interest in this paper. 

The above representation implies that the spot beta, idiosyncratic variance and market variance can be identified directly from the spot covariance matrix $\boldc_t$ as follows:
\begin{equation} \label{eq:beta_relation}
    \beta_t  =  \frac{c_{12,t}}{c_{11,t}}, \quad
    \varsigma_t = c_{22,t} - \frac{c_{12,t}^2}{c_{11,t}}, \quad \text{and} \quad
    \nu_t  =  c_{11,t}.
\end{equation}
Throughout the paper, $(\beta_t, \varsigma_t, \nu_t)$ are referred to as the spot regression parameters.

\subsection{Observation Scheme and Candlestick Returns} \label{sec:obs_scheme}
The price process $\boldsymbol{X}_t$ is assumed to be sampled on a regular time grid, $\{i\Delta_n: i=0,1, ..., n \}$ over a fixed time span $[0,T]$. Here, $\Delta_n=T/n$ refers to the sampling frequency and $n$ is the number of observations, an integer. High-frequency intervals are denoted by $\mathcal{T}_{i} \equiv [(i-1)\Delta_n, i\Delta_n]$ for each $i\in \{1, ..., n\}$. Following the standard practice in the high-frequency financial econometrics literature, I consider an in-fill asymptotic framework where $\Delta_n \to 0$ asymptotically, see for example \cite{ait2014high}.\footnote{The choice of $\Delta_n$ is usually guided by volatility signature plots introduced in \cite{andersen2000great}. Following the standard practice in the high-frequency econometrics literature, I adopt moderate sampling frequencies like $\Delta_n=1,5,10$-min in my practical implementations, thereby avoiding ultra high-frequency observations that could be contaminated by microstructure effects.}

A typical candlestick over the interval $\mathcal{T}_{i}$ consists of four prices:
\begin{equation*}
    \begin{array}{lccr}
        \boldsymbol{X}_{(i-1)\Delta_n}, &
        \displaystyle\sup_{t \in \mathcal{T}_i} \boldsymbol{X}_t, &
        \displaystyle\inf_{t \in \mathcal{T}_i} \boldsymbol{X}_t, &
        \boldsymbol{X}_{i\Delta_n}
    \end{array}
\end{equation*}
which are called the open, high, low, and close prices, respectively. From these prices, one can construct the following normalized returns:
\begin{equation} \label{eq:candlestick_returns}
    \begin{array}{rcl}
        \boldsymbol{r}_i & \equiv & \frac{\boldsymbol{X}_{i\Delta_n}-\boldsymbol{X}_{(i-1) \Delta_n}}{\sqrt{\Delta_n}}, \\[1em]
        
        \boldsymbol{h}_i & \equiv & \frac{\sup _{t \in \mathcal{T}_i} \boldsymbol{X}_t-\boldsymbol{X}_{(i-1) \Delta_n}}{\sqrt{\Delta_n}}, \\[1em]
        
        \boldsymbol{\ell}_i & \equiv & \frac{\inf _{t \in \mathcal{T}_i} \boldsymbol{X}_t-\boldsymbol{X}_{(i-1) \Delta_n}}{\sqrt{\Delta_n}}
    \end{array}
\end{equation}
where $\sup$ and $\inf$ operators are applied element-wise. The first line defines the standard high-frequency (open-to-close) return, commonly employed in the literature. The second and third lines indicate the high-open and low-open returns, respectively. These returns stand as a new source of information in this framework. Indeed, the bundle $(\boldsymbol{r}_i, \boldsymbol{h}_i, \boldsymbol{\ell}_i)$ summarizes the price dynamics within the interval $\mathcal{T}_{i}$ through the lens of candlesticks. 

I also define two variables:
\begin{equation} \label{eq:candlestick_range}
    \begin{array}{rcl}
        \boldsymbol{w}_i & \equiv & \boldsymbol{h}_i - \boldsymbol{\ell}_i, \\
        \boldsymbol{a}_i & \equiv & \boldh_i + \boldl_i - \boldr_i.
    \end{array}
\end{equation}
where $\boldsymbol{w}_i$ and $\boldsymbol{a}_i$ refer to the range and asymmetry of the candlestick returns, respectively. One can show that $(\boldr_i, \bolda_i, \boldw_i)$ is an invertible linear transformation of $(\boldsymbol{r}_i, \boldsymbol{h}_i, \boldsymbol{\ell}_i)$, implying that both vectors carry the same information. Recent papers on candlestick-based volatility estimation (e.g., \cite{li2024reading}, \cite{bollerslev2024optimal} and \cite{bollerslev2024newtricks}) work with $(\boldr_i, \bolda_i, \boldw_i)$. Thus, to be consistent with this literature and facilitate the analysis, I also work with $(\boldr_i, \bolda_i, \boldw_i)$ for the rest of the paper. However, the proposed methodology can be easily adapted to the original set of returns $(\boldsymbol{r}_i, \boldsymbol{h}_i, \boldsymbol{\ell}_i)$ without changing any of the main results.

To build intuition, I present the graphical representation of the candlestick returns on a typical candlestick chart in Figure \ref{fig:candlestick_chart}. As shown in this figure, the range $\boldw_i$ is shown as the vertical distance between the high and low prices, while the return $\boldr_i$ reflects the length of the thick body. The asymmetry $\bolda_i$ indicates the position of the returns (or the thick body) within that range. For example, if $\bolda_i=0$ then the thick body of the candlestick is exactly centered between the high and low prices. On the other hand, if $\bolda_i>0$ then the body is skewed towards the low price, and vice versa. 

\begin{figure}
    \centering
    \includegraphics[width=1\textwidth]{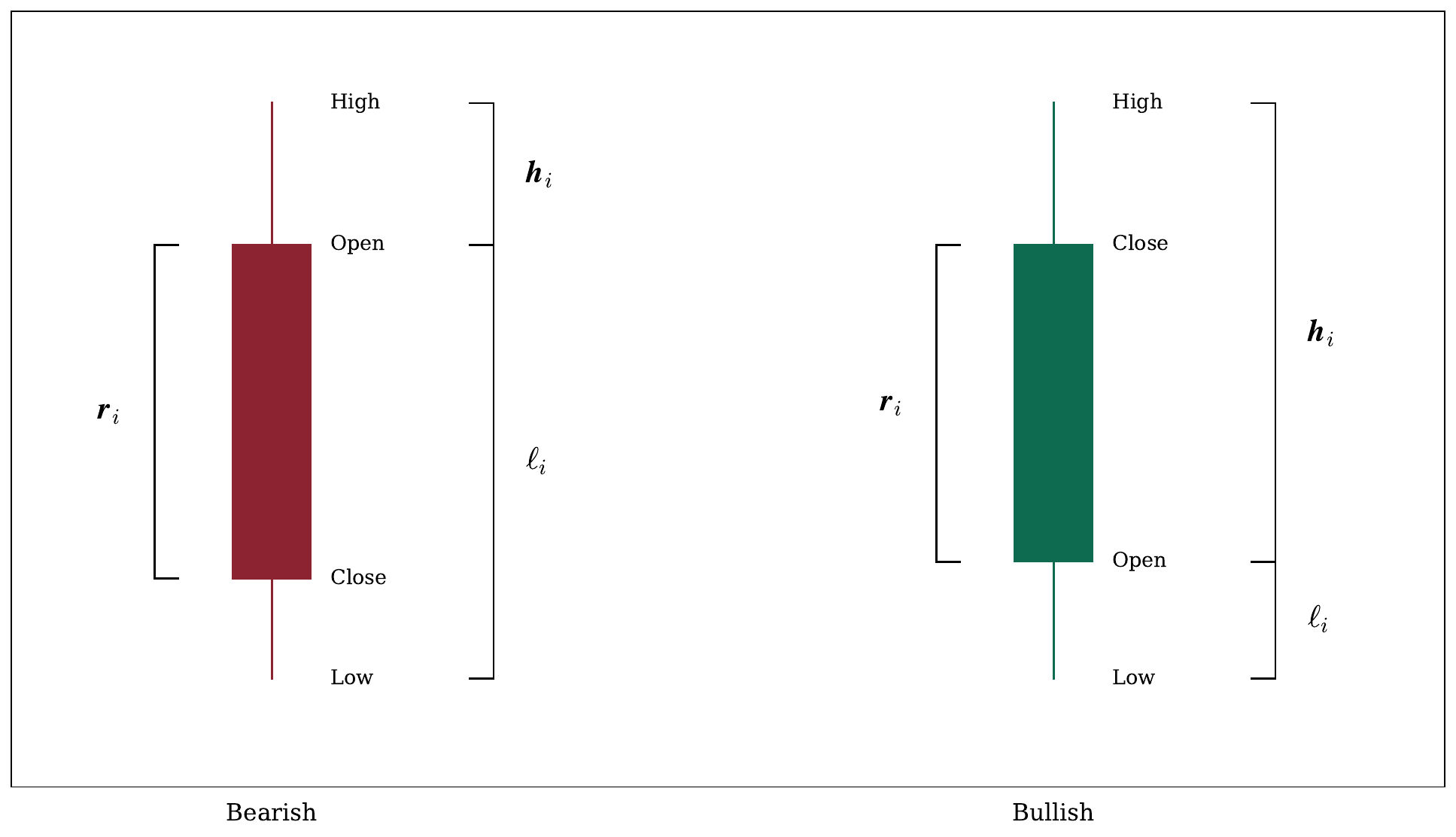}
    \caption{\textbf{Illustration of candlestick returns:} The figure shows examples of bearish (left) and bullish (right) candlesticks, illustrating the open, high, low, and close prices. Also it highlights the open-close return $\boldsymbol{r}_i$, high return $\boldsymbol{h}_i$, and low return $\boldsymbol{\ell}_i$ as defined in Equation \eqref{eq:candlestick_returns}.}
    \label{fig:candlestick_chart}
\end{figure}

\section{Candlestick-based Estimation} \label{sec:estimation}
The primary objective of this paper is to construct estimators for spot regression parameters that efficiently exploit the information contained in candlestick features. To this end, I first show how to construct spot covariance estimators using candlestick returns, and then obtain spot regression estimates exploiting the relation in Equation \eqref{eq:beta_relation}.

\subsection{A Class of Spot Covariance Estimators} 
A standard estimator for the spot covariance matrix $\boldc_t$ is given by the local average of the outer products of high-frequency (open-to-close) returns. For any $t \in [0, T]$, this can be formally expressed as:
\begin{equation} \label{eq:spotcov_estimator_old}
    \widehat{\boldc}_t = \frac{1}{k}\sum_{i\in \mathcal{I}_{n,t}} \boldsymbol{r}_{i} \boldsymbol{r}_{i}^{\top}.
\end{equation}
where $\mathcal{I}_{n,t} = \{\lceil \frac{t}{\Delta_n} \rceil + 1, \ldots, \lceil \frac{t}{\Delta_n} \rceil + k\}$ is the local estimation window and $k$ stands for the size of that window.\footnote{Here, $\lceil \cdot \rceil$ denotes the ceiling function, which maps a real number to the smallest following integer. Thus, $\mathcal{I}_{n,t}$ can be interpreted as the right-sided local window that contains $k$ observations after time $t$. Alternatively, left-sided or symmetric windows can be considered without changing the main results of the paper.} Correspondingly, the spot quantities in the regression representation in Equation \eqref{eq:reg_rep} can be estimated as follows:
\begin{equation} \label{eq:spotcov_estimator_old_beta}
    \widehat{\beta}_{n,t} = \frac{\widehat{c}_{12,t}}{\widehat{c}_{11,t}}, \quad
    \widehat{\varsigma}_{n,t} = \widehat{c}_{22,t} - \frac{\widehat{c}_{12,t}^2}{\widehat{c}_{11,t}}, \quad \text{and} \quad
    \widehat{\nu}_{n,t} = \widehat{c}_{11,t}.
\end{equation}

The estimator in Equation \eqref{eq:spotcov_estimator_old} is well-studied in the high-frequency econometrics literature; see, among others, \cite{fan2008spot}. While the conventional estimators let $k=k_n$ grow with sample size $n$, analogous to bandwidth parameter in classic kernel nonparametrics, I treat $k$ as a fixed constant, not dependent on $n$. Accordingly, throughout this paper I adopt fixed-$k$ asymptotics developed by \cite{bollerslev2021fixed}. This approach was originally proposed for inference on spot volatilities and subsequently extended to spot beta inference by \cite{bollerslev2024optimalspotregressions}. The framework yields inference that remains valid for a fixed, potentially small, value of \(k\), thereby directly capturing the local nature of spot estimates rather than relying on asymptotic approximations that require \(k\) to diverge.\footnote{Fixed-$k$ asymptotics build on coupling (or strong approximation) arguments; see \cite{jacod2021volatility}. The key idea is that, over short intervals, an Itô semimartingale can be locally approximated by a scaled Brownian motion. This feature allows for determining the (approximate) finite-sample distribution of the estimator under shrinking $\Delta_n$ (or diverging $n$). \cite{bollerslev2021fixed} show that fixed-$k$ asymptotics deliver confidence intervals with more accurate coverage than conventional large-$k$ approximations, especially in small samples, where the latter often suffer from nontrivial size distortions.}

By construction, Equation \eqref{eq:spotcov_estimator_old} relies solely on open-to-close returns, discarding potentially valuable information embedded in other candlestick returns. Thus, I introduce a flexible class of estimators for $\boldsymbol{c}_t$ that incorporates all candlestick returns $(\boldsymbol{r}_i, \boldsymbol{a}_i, \boldsymbol{w}_i)$ and combines them through a weighted sum of quadratic forms over a local window. 

Formally, the estimator is defined as:
\begin{equation} \label{eq:spotcov_estimator}
    \begin{aligned}
        \widehat{\boldsymbol{c}}_{n, t}(\lambda) &= \frac{1}{k} \sum_{i\in \mathcal{I}_{n,t}} \Big\{\lambda_{1} \boldr_{i} \boldr_{i}^{\top} + \lambda_{2} \bolda_{i} \bolda_{i}^{\top} + \lambda_{3} \boldw_{i} \boldw_{i}^{\top} \\
         &+ \lambda_{4}(\boldr_{i} \bolda_{i}^{\top} + \bolda_{i} \boldr_{i}^{\top}) + \lambda_{5}(\boldr_{i} \boldw_{i}^{\top} + \boldw_{i} \boldr_{i}^{\top}) + \lambda_{6}(\bolda_{i} \boldw_{i}^{\top} + \boldw_{i} \bolda_{i}^{\top}) \Big\}
    \end{aligned}
\end{equation}
where $\lambda \equiv (\lambda_{1}, \lambda_{2}, \lambda_{3}, \lambda_{4}, \lambda_{5}, \lambda_{6}) \in \mathbb{R}^6$ denote the weights assigned to each   component. This estimator can be compactly expressed in matrix form as follows:
\begin{equation}
    \widehat{\boldsymbol{c}}_{n, t}(\lambda) = \frac{1}{k} \sum_{i\in \mathcal{I}_{n,t}} \boldz_i \Lambda \boldz_i^{\top}, \quad \text{where} \quad \boldz_i = \begin{pmatrix}
        \boldr_i &
        \bolda_i &
        \boldw_i
    \end{pmatrix} \quad \text{and} \quad\Lambda = \begin{pmatrix}
        \lambda_1 & \lambda_4 & \lambda_5 \\
        \lambda_4 & \lambda_2 & \lambda_6 \\
        \lambda_5 & \lambda_6 & \lambda_3
    \end{pmatrix}.
\end{equation}
The resulting estimator is always symmetric and positive semi-definite whenever $\Lambda$ is positive semi-definite. The choice of \(\lambda\) will be discussed in subsequent sections in detail. Naturally, I propose the following estimates for the spot quantites in the regression representation:

\begin{equation} \label{eq:spotbeta_estimator}
    \widehat{\beta}_{n,t}(\lambda) = \frac{\widehat{c}_{12,t}(\lambda)}{\widehat{c}_{11,t}(\lambda)}, \quad
    \widehat{\varsigma}_{n,t}(\lambda) = \widehat{c}_{22,t}(\lambda) - \frac{\widehat{c}_{12,t}^2(\lambda)}{\widehat{c}_{11,t}(\lambda)}, \quad \text{and} \quad
    \widehat{\nu}_{n,t}(\lambda) = \widehat{c}_{11,t}(\lambda).
\end{equation}

The proposed class of estimators is highly flexible. For example, setting $\lambda=(1,0,0)$ recovers the classic spot covariance and beta estimator defined in Equation \eqref{eq:spotcov_estimator_old} and \eqref{eq:spotcov_estimator_old_beta}. Furthermore, the estimator can be viewed as a natural multivariate extension of the univariate candlestick-based volatility estimators, and hence my approach aims to generalize these efficiency gains to the multivariate setting.

\begin{remark}
    At a first glance, it might be surprising that moments of high and low returns are included in the covariance estimator just like the regular returns. In this regard, \cite{rogers2006correlation} show that the correlations of range and asymmetry variables are smooth and nonlinear functions of the correlation coefficient of the underlying price process. Thus this suggests that the cross moments of candlestick returns may contain valuable information about the underlying correlation structure and thus motivates my approach to incorporate them into the estimation process. 
\end{remark}



\subsection{Determining the Weights via Risk Minimization} \label{sec:optimal_weight}
The estimator in Equation \eqref{eq:spotcov_estimator}, and by extension Equation \eqref{eq:spotbeta_estimator}, depends crucially on the choice of weights $\lambda$. Following a decision-theoretic approach, I determine the weighting scheme by minimizing a well-defined risk function. Specifically, the risk of an estimator $\widehat{\boldsymbol{c}}_{n,t}(\lambda)$ is defined as the conditional expectation of its loss:
\begin{equation} \label{eq:risk_function}
    R(\lambda; \boldc_{t}) \equiv \mathbb{E}\left [ L(\lambda; \boldc_{t}) \mid \mathcal{F}_{t} \right ].
\end{equation}
where $L(\lambda; \boldc_{t})$ is a loss function that measures the estimation error of $\widehat{\boldsymbol{c}}_{n,t}(\lambda)$ relative to the true spot covariance $\boldc_t$. I employ the following quadratic loss function:
\begin{equation}
    L(\lambda; {\boldc}_t) = \left \| \boldsigma_t^{-1} \widehat{\boldsymbol{c}}_{n,t}(\lambda) \boldsigma_t^{-1\top} - \mathbf{I} \right \|^{2}
\end{equation}
where $||\cdot||$ denotes the Frobenius norm and $\mathbf{I}$ is the identity matrix. The term $\boldsymbol{\sigma}_t^{-1}\,\widehat{\boldsymbol{c}}_{n,t}\,\boldsymbol{\sigma}_t^{-1\top} $ indicates the multiplicative estimation error of the spot covariance matrix relative to the identity, naturally inducing a scale-invariant loss function. This property is particularly desirable, as covariance matrices represent scale parameters. In the univariate setting, the loss reduces to
\[
\left( \frac{\widehat{c}_{n,t}}{c_t} - 1 \right)^{2},
\]
which corresponds to the relative squared error that has been previously analyzed in the context of volatility estimation with candlestick data by \cite{li2024reading}.

\begin{remark}
    The determination of the weights $\lambda$ directly targets the full covariance matrix $\boldc_t$ rather than the spot beta $\beta_t$ which is the main focus in a regression context. A reader might therefore ask why covariance-optimal weights is considered here. I deliberately adopt this approach because targeting covariance matrix allows for unified treatment of all spot quantities in the regression representation in Equation \eqref{eq:reg_rep} as they are nonlinear functions of the covariance matrix. In other words, this approach yields a single set of weights that can be used for estimating all spot quantities simultaneously. Furthermore, in Appendix \ref{app:loss_decomp}, I show that this loss function can be decomposed into three components that can be directly associated with the estimation errors of the spot regression parameters. This suggests that minimizing the overall loss effectively balances the trade-offs between the estimation errors of different parameters. Indeed, simulation results in Section \ref{sec:simulation} show that the proposed approach leads to more reliable inference for betas, serving as ex-post supportive evidence. In addition, from a practical perspective, this approach yields a closed-form solution for the weights, which greatly simplifies the computation.
\end{remark}

\subsubsection{Asymptotic Approximation for the Risk Function} \label{sec:asymptotic_risk}
$R(\lambda; \boldc_{t})$ indicates the exact finite sample risk of an estimator. For a given $\lambda$, the risk depends on the joint distribution of $(\boldr_i, \bolda_i, \boldw_i)$ which in turn is determined by the law of $(\boldsymbol{b}, \boldsigma, \boldW)$. Since this law is unknown, the estimation of $\lambda$ with direct minimization of $R(\lambda; \boldsigma_{t})$ is infeasible in practice. To circumvent this issue, I first establish an asymptotic approximation of the multiplicative estimation error \(\boldsymbol{\sigma}_t^{-1}\widehat{\boldsymbol{c}}_{n,t}\boldsymbol{\sigma}_t^{-1\top}\) using coupling arguments (see, e.g., \cite{jacod2021volatility}). This characterization is subsequently extended to the loss function and, in turn, to the risk. The resulting asymptotic risk functional is then used to determine the optimal weights \(\lambda\).

The assumption below gathers a set of regularity conditions, required for the asymptotic approximation results. 

\begin{assumption} \label{asmp:spotcov_asmp}
    Suppose that $\boldsymbol{X}_t$ has the form in Equation \eqref{eq:itosemi} and there exists a sequence $(T_m)_{m \geq 1}$ of stopping times increasing to infinity and the following conditions hold for each $m \geq 1$:
    \begin{itemize}
        \item[(i)] $\|\boldsymbol{b}_t\| + \|\boldsymbol{\sigma}_t\| + \|\boldsigma_t^{-1}\| \leq K_m$ for some constant $K_m$ for all $t \in [0, T_m]$;
        \item[(ii)] $\mathbb{E}\left[ \|\boldsigma_{t \wedge T_m}-\boldsigma_{s\wedge T_m} \|^2 \right] \leq K_m |t-s|$ for all $t, s \in [0, T_m]$.
    \end{itemize}
\end{assumption}   

These conditions are quite standard in the high-frequency econometrics literature; see, for example, \cite{jacod2012discretization} for further details. The first part of the Assumption \ref{asmp:spotcov_asmp} implies local boundedness of the drift, volatility, and inverse volatility processes while the second part ensures a degree of smoothness in the volatility process, specifically called as locally $1/2$-Hölder continuous. These assumptions are sufficiently general to accommodate a wide range of volatility dynamics, including volatility jumps, leverage effects, and intraday seasonality. For example, they are satisfied if the volatility process is itself an Itô semimartingale or long-memory process driven by fractional Brownian motion.

The following proposition describes the key approximation result for the estimation error \(\boldsymbol{\sigma}_t^{-1}\widehat{\boldsymbol{c}}_{n,t}\boldsymbol{\sigma}_t^{-1\top}\). The proof is deferred to Appendix \ref{app:prop_1_proof}.

\begin{proposition} \label{prop:spotcov_coupling}
    Suppose that Assumption \ref{asmp:spotcov_asmp} holds. Fix any $t \in [0,T]$. For any $k \geq 1$ and $\lambda$, the following holds as $\Delta_n \to 0$:

    \begin{equation} \label{eq:spotcov_coupling}
        \left\| \boldsymbol{\sigma}_t^{-1}\widehat{\boldsymbol{c}}_{n, t}(\lambda)\boldsymbol{\sigma}_t^{-\top} -  U_{n,t}(\lambda) \right\| = o_p(1)
    \end{equation}
    where
    \begin{equation}\label{eq:U_nt}
        \begin{aligned} 
            U_{n,t}(\lambda) &= \frac{1}{k}\sum_{i\in \mathcal{I}_{n,t}} \Big\{\lambda_1 \boldzeta_{i,r} \boldzeta_{i,r}^{\top} + \lambda_2 \boldzeta_{i,a} \boldzeta_{i,a}^{\top} + \lambda_3 \boldzeta_{i,w} \boldzeta_{i,w}^{\top} \\
            &+ \lambda_4 (\boldzeta_{i,r} \boldzeta_{i,a}^{\top} + \boldzeta_{i,a} \boldzeta_{i,r}^{\top}) + \lambda_5(\boldzeta_{i,r} \boldzeta_{i,w}^{\top} + \boldzeta_{i,w} \boldzeta_{i,r}^{\top}) + \lambda_6(\boldzeta_{i,a} \boldzeta_{i,w}^{\top} + \boldzeta_{i,w} \boldzeta_{i,a}^{\top})\Big\}
        \end{aligned}
    \end{equation}
    and, for any $i\in \mathcal{I}_{n,t}$,
    \begin{equation}\label{eq:coupling_vars}
    \begin{array}{rcl}
        \boldsymbol{\zeta}_{i, r} &\equiv&  
         \frac{\boldW_{i\Delta_n} - \boldW_{(i-1)\Delta_n}}{\sqrt{\Delta_n}} \\
         \boldsymbol{\zeta}_{i, a} &\equiv& 
          \boldvarrho_{t}^{-1}\underset{\tau \in \mathcal{T}_i}{\sup \:} \boldvarrho_{t} \left(\frac{\boldW_{\tau} - \boldW_{(i-1)\Delta_n}}{\sqrt{\Delta_n}}\right) + \boldvarrho_{t}^{-1}\underset{\tau \in \mathcal{T}_i}{\inf \:} \boldvarrho_{t}   \left(\frac{\boldW_{\tau} - \boldW_{(i-1)\Delta_n}}{\sqrt{\Delta_n}}\right) - \left(\frac{\boldW_{i\Delta_n} - \boldW_{(i-1)\Delta_n}}{\sqrt{\Delta_n}}\right) \\
         \boldsymbol{\zeta}_{i, w} &\equiv& 
          \boldvarrho_{t}^{-1}\underset{\tau \in \mathcal{T}_i}{\sup \:} \boldvarrho_{t}   \left(\frac{\boldW_{\tau} - \boldW_{(i-1)\Delta_n}}{\sqrt{\Delta_n}}\right) - \boldvarrho_{t}^{-1}\underset{\tau \in \mathcal{T}_i}{\inf \:} \boldvarrho_{t}   \left(\frac{\boldW_{\tau} - \boldW_{(i-1)\Delta_n}}{\sqrt{\Delta_n}}\right) \\
    \end{array}
    \end{equation}
    with $\boldvarrho_t$ being the square root of spot correlation matrix $\boldsymbol{\rho}_t$, i.e., $\boldsymbol{\rho}_t = \boldvarrho_t \boldvarrho_t^\top$. All $\inf$ and $\sup$ operators are applied element-wise.
\end{proposition}

Proposition \ref{prop:spotcov_coupling} establishes that the multiplicative estimation error of the spot covariance estimator $\widehat{\boldsymbol{c}}_{n, t}(\lambda)$ is asymptotically approximated by the random matrix $U_{n, t}$ in probability as $\Delta_n \to 0$. The structure of $U_{n, t}$ mirrors that of the original estimator, with the candlestick returns replaced by the variables $\boldzeta_i$s. These variables are functions of Brownian motion $\boldW$ and square root of spot correlation matrix $\boldvarrho_t$. In addition, the approximation error ($o_p(1)$ term) captures nonparametric biases arising from stochastic volatility and drift. Put differently, in the limiting case of constant volatility and vanishing drift, the relationship in Proposition \ref{prop:spotcov_coupling} holds exactly rather than approximately.

Using Proposition~\ref{prop:spotcov_coupling} and the conditions stated therein, analogous approximations apply to the loss and risk functions. Since the loss function is continuous, the continuous mapping theorem implies
\begin{equation} \label{eq:asymptotic_loss}
    L(\lambda; \boldc_t) = \left\| U_{n, t}(\lambda) - \mathbf{I}\right\|^2  +  o_p(1).
\end{equation}
This relation can similarly be extended to the risk function by taking the conditional expectation of both sides given $\mathcal{F}_t$:
\begin{equation} \label{eq:asymptotic_risk}
    {R}(\lambda; \boldc_t) = \underbrace{\mathbb{E}\left[ \left\| U_{n, t}(\lambda) - \boldsymbol{I} \right\|^2 \mid \mathcal{F}_{t} \right]}_{\widetilde{R}(\lambda; \boldrho_{t})} + o_p(1)
\end{equation}
where $\widetilde{R}(\lambda; \boldrho_t)$ is the asymptotic risk of the estimator $\widehat{\boldsymbol{c}}_{n, t}$.\footnote{This step requires an additional uniform integrability condition to interchange the limit and expectation operators. This condition is satisfied under the assumptions stated in Assumption \ref{asmp:spotcov_asmp}.} Importantly, I switch to notation $\widetilde{R}(\lambda; \boldrho_t)$ to emphasize that the asymptotic risk depends on the spot correlation matrix $\boldrho_t$ rather than the spot covariance matrix $\boldc_t$. This is because the risk formula involves a term, $U_{n,t}$, which depends on $\boldvarrho_t$, i.e., the square root of the spot correlation matrix. The use of the Frobenius norm in the final calculation effectively considers the product $\boldvarrho_t \boldvarrho_t^\top$, making the risk solely dependent on $\boldrho_t$.

As a result, direct minimization of the asymptotic risk $\widetilde{R}(\lambda; \boldrho_t)$ with respect to $\lambda$ to obtain the optimal weights is infeasible in practice, as $\boldrho_t$ is unknown to the researcher. In the univariate case, by contrast, the term $U_{n,t}$ does not depend on any unknown parameters and thus the distribution of $U_{n,t}$ is pivotal. This allows for direct computation of the asymptotic risk and thus the derivation of optimal weights via direct risk minimization, see Theorem 1 in \citet{li2024reading, bollerslev2024optimal}. This is because the scaling by $\boldsymbol{\varrho}_t$ and its inverse cancels, as they become scalars, in Equation \eqref{eq:coupling_vars}. Hence, once can characterize the distribution of $U_{n,t}$ through simulations of Brownian motion functionals. This simplification does not hold in the multivariate setting, and thus the estimation problem studied in this paper requires more intricate treatment than the univariate case.



\subsubsection{Average Risk} \label{sec:average_risk}
To address the complication discussed above, I propose an alternative approach for selecting the weights $\lambda$ based on the average risk $\bar{R}(\lambda)$, defined as the integrated asymptotic risk $\widetilde{R}(\lambda; \boldsymbol{\rho}_t)$ over the parameter space of $\boldsymbol{\rho}_t$. This approach effectively marginalizes over the unknown correlation structure, thereby enabling the derivation of ``optimal'' weights without requiring explicit knowledge of $\boldsymbol{\rho}_t$.\footnote{Here optimality is used in the context of minimizing the average risk which is specifically defined in Equation \eqref{eq:average_risk}.} This perspective is directly inspired by the statistical decision theory literature, see for example \cite{lehmann2006theory}, which advocates the use of integrated risk functions to accommodate nuisance parameters. 

Let $\mathcal{P}$ denote the parameter space of positive semidefinite correlation matrices. Define the average risk $\bar{R}(\lambda)$ as the integrated asymptotic risk
\begin{equation} \label{eq:average_risk}
    \bar{R}(\lambda) \equiv \int_{\mathcal{P}} \widetilde{R}(\lambda; \boldsymbol{\rho}_t) d\boldsymbol{\rho}_t.
\end{equation}
Put differently, $\bar{R}(\lambda)$ is averaging over the interval $(-1,1)$ as the setup has two assets and single correlation parameter.

For any fixed $\lambda$, the mapping $\boldsymbol{\rho} \mapsto \widetilde{R}(\lambda; \boldsymbol{\rho})$ can be computed via Monte Carlo simulation. Specifically, conditional on $\boldsymbol{\rho}_t$, the distribution of $U_{n,t}$ is fully characterized by functionals of Brownian motion defined. Thus, the risk $\widetilde{R}(\lambda; \boldsymbol{\rho}_t)$ can be computed by simulating these functionals and evaluating the corresponding loss function. Consequently, the average risk $\bar{R}(\lambda)$ depends only on $\lambda$, and the optimal weights can be obtained by solving
\begin{equation} 
    \lambda^* = \arg\min_{\lambda} \bar{R}(\lambda).
\end{equation}

The solution to this optimization problem is available in closed form and resembles a least-squares formula:
\begin{equation} \label{eq:optimal_lambda}
    \lambda^* = \left(\int_{\mathcal{P}} {\mathbb{E}}[\boldsymbol{\Pi}^{\top}\boldsymbol{\Pi} | \mathcal{F}_t]d\boldrho_t \right)^{-1} \left(\int_{\mathcal{P}} {\mathbb{E}}[\boldsymbol{\Pi}^{\top}\boldsymbol{y} | \mathcal{F}_t] d\boldrho_t \right)
\end{equation}
where $\boldsymbol{\Pi}$ is stacking the vectorized version of outer products of limiting variables and $\boldsymbol{y}$ is the vectorized version of the identity matrix $\mathbf{I}$. More explicitly, $\boldsymbol{y} = \vech(\mathbf{I})$ is $3 \times 1$ vector and
\begin{equation}
\begin{aligned}
    \boldsymbol{\Pi} \equiv \bigg[
    & \tfrac{1}{k} \sum_i \vech(\boldzeta_{i,r} \boldzeta_{i,r}^{\top}) \;\;
      \tfrac{1}{k} \sum_i \vech(\boldzeta_{i,a} \boldzeta_{i,a}^{\top}) \;\;
      \tfrac{1}{k} \sum_i \vech(\boldzeta_{i,w} \boldzeta_{i,w}^{\top}) \cdots \\
    & \tfrac{1}{k} \sum_i \vech(\boldzeta_{i,r} \boldzeta_{i,a}^{\top} + \boldzeta_{i,a} \boldzeta_{i,r}^{\top}) \;\;
      \tfrac{1}{k} \sum_i \vech(\boldzeta_{i,r} \boldzeta_{i,w}^{\top} + \boldzeta_{i,w} \boldzeta_{i,r}^{\top}) \;\;
      \tfrac{1}{k} \sum_i \vech(\boldzeta_{i,a} \boldzeta_{i,w}^{\top} + \boldzeta_{i,w} \boldzeta_{i,a}^{\top}) \bigg]
\end{aligned}
\end{equation}
is a $3 \times 6$ matrix and each column corresponds to the half-vectorized version of the outer products of the limiting variables defined in Equation \eqref{eq:coupling_vars}.

The resulting spot covariance estimator can be denoted as $\widehat{\boldc}_{n,t}(\lambda^*)$ and the corresponding estimators for spot beta, idiosyncratic variance, and systematic variance can be obtained by substituting $\lambda^*$ into Equation \eqref{eq:spotbeta_estimator}.
\begin{equation}
    \widehat{\beta}_{n,t}(\lambda^*) = \frac{\widehat{c}_{12,t}(\lambda^*)}{\widehat{c}_{11,t}(\lambda^*)}, \quad
    \widehat{\varsigma}_{n,t}(\lambda^*) = \widehat{c}_{22,t}(\lambda^*) - \frac{\widehat{c}_{12,t}^2(\lambda^*)}{\widehat{c}_{11,t}(\lambda^*)}, \quad \text{and} \quad
    \widehat{\nu}_{n,t}(\lambda^*) = \widehat{c}_{11,t}(\lambda^*).
\end{equation}

In practice, the integrals in $\lambda^*$ formula (Equation \eqref{eq:optimal_lambda}) can be computed by simulation. Specifically, one can draw many realizations of $\boldsymbol{\rho}$ uniformly from $\mathcal{P}$, calculate the corresponding conditional expectations via Monte Carlo simulation, and replace the integrals with sample averages. The estimator is then obtained by substituting these into Equation \eqref{eq:optimal_lambda}. Importantly, this numerical procedure takes estimation window size $k$ as a given constant, consistent with this paper's asymptotic framework.

Equation \eqref{eq:optimal_lambda_results} presents the optimal weights $\lambda^*$ for different values of $k$. There are several noteworthy observations regarding these results. First, the optimal weights for the first two components, $\lambda_1$ and $\lambda_2$, are notably larger than those for the remaining components, suggesting that the open-close returns and asymmetry variables play a more dominant role in the estimates. Surprisingly, the optimal weight for the range variable, $\lambda_3$, is near zero across all settings. The reason is that the range variable loses the sign information of the underlying correlation structure and is heavily biased for negative correlations. Consequently, the estimation process substantially downweights it. On the other hand, the cross terms mechanically receive exact zero weight since $(\boldr_i, \bolda_i, \boldw_i)$ are asymptotically orthogonal to each other.

\begin{equation} \label{eq:optimal_lambda_results}
    \lambda^* = \begin{cases}
        (0.438, ~~ 1.523, ~~ \approx 0, ~~ 0, ~~ 0, ~~ 0)^{\top} & \text{if  }k=5, \\[0.5em]
        (0.488, ~~ 1.648, ~~ \approx 0, ~~ 0, ~~ 0, ~~ 0)^{\top} & \text{if  }k=10, \\[0.5em]
        (0.526, ~~ 1.692, ~~ \approx 0, ~~ 0, ~~ 0, ~~ 0)^{\top}  & \text{if  }k=20.
    \end{cases}
\end{equation}

\begin{remark}
    When defining the average risk, the integration over $\mathcal{P}$ is performed uniformly which also induces a uniform marginal distribution for the correlation parameter. This may appear simplistic and a natural concern can arise regarding this specification. Indeed alternative choices including more flexible distributional assumptions or those informed by the historical data and expert judgment could be certainly considered. Nevertheless, the asymptotic risk comparisons presented in Section \ref{sec:risk_comparison} reveals that this approach performs nearly as well as oracle estimators, leaving limited potential improvement through alternative specifications. Moreover, solving this optimization problem for $\lambda$ does not inherently depend on uniform integration. As such, the optimal weights can be obtained for any given marginal distribution of $\boldrho$ following the same computational steps.
\end{remark}

\begin{remark}
    An estimator of similar form, a linear combination of $\boldr_i \boldr_i^{\top}$ and $\bolda_i \bolda_i^{\top}$, was previously studied by \cite{rogers2008estimating} for the purpose of estimating the correlation between two Brownian motions. Their analysis assumes that the price process follows a scaled Brownian motion with constant volatility and determines the weights by minimizing the variance of the estimation error, subject to an unbiasedness constraint under the assumption of zero correlation. By contrast, my approach is formulated within a general Itô-semimartingale framework which accommodates stochastic volatility and selects the weights by minimizing the risk without imposing restrictions on the correlation level. In addition, the next section develops an inference procedure that enables hypothesis testing for spot betas, whereas \citet{rogers2008estimating} focus exclusively on point estimation. Beyond these methodological differences, the Monte Carlo experiments in Section \ref{sec:simulation} and the empirical application in Section \ref{sec:empirics} provide a comprehensive evaluation of candlestick-based estimators in a multivariate setting, which is also absent from the prior literature.
\end{remark}

\section{Inference on Spot Betas} \label{sec:inference}
Spot betas are fundamental objects of interest in asset pricing and risk management, yet inference on these quantities remains an underexplored area in the candlestick-based estimation literature. Existing work has focused on inference for spot volatility functionals (see, e.g., \cite{li2024reading}; \cite{bollerslev2024optimal}). This section fills that gap. Specifically, I develop a hypothesis test for the spot beta, derive its limiting null distribution, and provide a feasible procedure for computing critical values.

\subsection{Test Statistic}
For a fixed time point $t\in [0,T]$, and hypothesized value $\beta_0 \in \mathbb{R}$, consider testing the null hypothesis 
$$
    H_0: \beta_t = \beta_0 \quad \text{against} \quad  H_1: \beta_t \neq \beta_0
$$

Building on the candlestick-based estimator,
I propose the following studentized test statistic:
$$
    \widehat{T}_n(\lambda) = \frac{\sqrt{k-1}\left (\widehat{\beta}_{n,t}(\lambda) - \beta_0\right )}{\sqrt{\widehat{\varsigma}_{n,t}(\lambda) / \widehat{\nu}_{n,t}(\lambda)}}
$$
where $\widehat{\beta}_{n,t}(\lambda)$, $\widehat{\nu}_{n,t}(\lambda)$, and $\widehat{\varsigma}_{n,t}(\lambda)$ are candlestick-based estimates for some $\lambda$. Therefore, it encompasses both the return-based statistic of \cite{bollerslev2024optimalspotregressions}, recovered at $\lambda = (1, 0, ..., 0)^\top$, and the candlestick-based statistic associated with the risk-minimizing weights $\lambda^*$ derived in Section \ref{sec:estimation}.

To study the asymptotic properties of the test statistic, it is useful to establish an asymptotic approximation to $\widehat{T}_n(\lambda)$, which can be achieved by exploiting Proposition \ref{prop:spotcov_coupling} and the continuous mapping theorem. This result is summarized in the following proposition. Proof is provided in Appendix \ref{app:prop_2_proof}. Similar to the estimation problem, fixed-$k$ asymptotics is adopted here.

\begin{proposition} \label{prop:test_coupling}
    Suppose that the conditions of Proposition 1 hold and $\boldsymbol{X}_t$ follows Equation \eqref{eq:reg_rep}. Then, for any fixed $k \geq 2$ and $\lambda$, the following holds as $\Delta_n \to 0$:
    $$
        |\widehat{T}_n(\lambda) - \widetilde{T}_n(\lambda)| = o_p(1) \quad \text{where} \quad \widetilde{T}_n(\lambda) \equiv \frac{-\sqrt{k-1}[U_{n, t}^{-1}]_{12}}{\sqrt{[U_{n, t}^{-1}]_{11}[U_{n, t}^{-1}]_{22}-[U_{n, t}^{-1}]_{12}^2}}.
    $$
\end{proposition}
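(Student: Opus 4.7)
The proof proceeds by reducing the test statistic to a rational function of the entries of $\widehat{\boldsymbol{c}}_{n,t}$ and then invoking the coupling result of Proposition~\ref{prop:spotcov_coupling} to replace these entries by those of $\boldsymbol{\sigma}_t U_{n,t}\boldsymbol{\sigma}_t^{\top}$ up to an $o_p(1)$ error. The key algebraic observation is that the lower-triangular form of $\boldsymbol{\sigma}_t$ implied by Equation~\eqref{eq:reg_rep} forces the nuisance parameters $\nu_t$, $\varsigma_t$, and $\beta_t$ to cancel cleanly, producing a quantity that depends solely on $U_{n,t}$.

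First, I would rewrite $\widehat{T}_n$ in a polynomial form. Using the Schur-complement identity $\widehat{\varsigma}_{n,t} = \det(\widehat{\boldsymbol{c}}_{n,t})/[\widehat{\boldsymbol{c}}_{n,t}]_{11}$, the statistic simplifies to
\[
\widehat{T}_n = \frac{\sqrt{k-1}\,\bigl([\widehat{\boldsymbol{c}}_{n,t}]_{12} - \beta_t\,[\widehat{\boldsymbol{c}}_{n,t}]_{11}\bigr)}{\sqrt{\det(\widehat{\boldsymbol{c}}_{n,t})}}.
\]
Next, Proposition~\ref{prop:spotcov_coupling} together with the local boundedness of $\boldsymbol{\sigma}_t$ from Assumption~\ref{asmp:spotcov_asmp} yields $\widehat{\boldsymbol{c}}_{n,t} = \boldsymbol{\sigma}_t U_{n,t}\boldsymbol{\sigma}_t^{\top} + o_p(1)$ entrywise. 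Using the triangular form of $\boldsymbol{\sigma}_t$, a direct computation gives $[\boldsymbol{\sigma}_t U_{n,t}\boldsymbol{\sigma}_t^{\top}]_{11} = \nu_t [U_{n,t}]_{11}$, $[\boldsymbol{\sigma}_t U_{n,t}\boldsymbol{\sigma}_t^{\top}]_{12} = \beta_t\nu_t[U_{n,t}]_{11} + (\nu_t\varsigma_t)^{1/2}[U_{n,t}]_{12}$, and $\det(\boldsymbol{\sigma}_t U_{n,t}\boldsymbol{\sigma}_t^{\top}) = \nu_t\varsigma_t\det(U_{n,t})$. Substituting, the $\beta_t\nu_t[U_{n,t}]_{11}$ term in the numerator is eliminated by the $\beta_t[\widehat{\boldsymbol{c}}_{n,t}]_{11}$ subtraction, and the common factor $(\nu_t\varsigma_t)^{1/2}$ cancels between numerator and denominator, leaving $\widehat{T}_n = \sqrt{k-1}[U_{n,t}]_{12}/\sqrt{\det(U_{n,t})} + o_p(1)$. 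Matching this against $\widetilde{T}_n$ is then an application of the $2\times 2$ inverse identities: $[U_{n,t}^{-1}]_{12} = -[U_{n,t}]_{12}/\det(U_{n,t})$ and $[U_{n,t}^{-1}]_{11}[U_{n,t}^{-1}]_{22}-[U_{n,t}^{-1}]_{12}^2 = \det(U_{n,t}^{-1}) = 1/\det(U_{n,t})$, which recover the expression on the right-hand side of the claim.

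The main obstacle is verifying that the continuous-mapping step is valid, i.e., that $\det(U_{n,t})$ is bounded away from zero with high probability, since it appears in the denominator of both expressions. For fixed $k \geq 2$ and $\lambda$ with at least one strictly positive weight, $U_{n,t}$ is a sum of at least two independent rank-one outer products of absolutely continuous functionals of Brownian increments, hence positive definite almost surely, so the singular set has measure zero under the coupling law. Combined with the standard localization through the stopping times $T_m$ of Assumption~\ref{asmp:spotcov_asmp}, this ensures that the entrywise $o_p(1)$ approximation from Proposition~\ref{prop:spotcov_coupling} transfers to the rational function defining the test statistic, completing the argument.
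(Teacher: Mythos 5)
Your proposal follows essentially the same route as the paper: both arguments reduce everything to Proposition~\ref{prop:spotcov_coupling} and then push the $o_p(1)$ coupling through the smooth map defining the statistic. The only structural difference is the direction of the algebra --- the paper inverts the coupled relation and reads $\widehat{\beta}_{n,t}$, $\widehat{\nu}_{n,t}$, $\widehat{\varsigma}_{n,t}$ off the entries of $\boldsymbol{\sigma}_t^{\top}\widehat{\boldsymbol{c}}_{n,t}^{-1}\boldsymbol{\sigma}_t \approx U_{n,t}^{-1}$, whereas you multiply through by $\boldsymbol{\sigma}_t$ to obtain $\widehat{\boldsymbol{c}}_{n,t}\approx \boldsymbol{\sigma}_t U_{n,t}\boldsymbol{\sigma}_t^{\top}$ and substitute into the Schur-complement form of $\widehat{T}_n$. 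Your version is somewhat cleaner and, unlike the paper, explicitly justifies that the denominator is nondegenerate (almost-sure positive definiteness of $U_{n,t}$), which is exactly the hypothesis the continuous-mapping step needs; the paper simply asserts positive definiteness before invoking continuity of matrix inversion.

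One concrete issue: your final step does not actually close. You correctly derive $\widehat{T}_n = \sqrt{k-1}\,[U_{n,t}]_{12}/\sqrt{\det U_{n,t}} + o_p(1)$, but the identities you then invoke give $[U_{n,t}^{-1}]_{12} = -[U_{n,t}]_{12}/\det U_{n,t}$, hence $\widetilde{T}_n = -\sqrt{k-1}\,[U_{n,t}]_{12}/\sqrt{\det U_{n,t}}$ --- the \emph{negative} of what you obtained. As written, your ``matching'' asserts $x=-x$. In fairness, the same sign slip appears in the paper's own proof: the displayed line equating $(\widehat{\beta}_{n,t}-\beta_t)/\sqrt{\cdot}$ with $[U_{n,t}^{-1}]_{12}/[U_{n,t}^{-1}]_{22}$ should carry $(\beta_t-\widehat{\beta}_{n,t})$, since $[U_{n,t}^{-1}]_{12}$ is asymptotically proportional to $\beta_t-\widehat{\beta}_{n,t}$ while $[U_{n,t}^{-1}]_{22}>0$. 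So the proposition as stated really couples $\widehat{T}_n$ with $-\widetilde{T}_n$, which matters for the one-sided critical values $B_\alpha^{\pm}$ reported in Table~\ref{tab:beta_confint}. Your write-up should either flag this discrepancy or absorb the sign into the definition of $\widetilde{T}_n$, rather than claim the two expressions coincide.
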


This proposition shows that $\widehat{T}_n(\lambda)$ can be asymptotically approximated by $\widetilde{T}_n(\lambda)$, which is a function of the matrix $U_{n,t}$ defined in Proposition \ref{prop:spotcov_coupling}. As such, the distributional properties of the limiting variable $\widetilde{T}_n(\lambda)$ can be exploited to compute the critical values for the test. Next, I turn to this problem.


\subsection{Critical Values}
For a given nominal level $\alpha \in (0,1)$, critical values for the test can be described formally as follows:
\begin{equation}
    \mathbb{P}\!\left( B^{-}_{\alpha} < \widetilde{T}_n(\lambda) < B^{+}_{\alpha} \,\middle|\, H_0 \right) = 1 - \alpha.
\end{equation}
where constants $B^{+}_{\alpha}$ and $B^{-}_{\alpha}$ are the upper and lower critical values for the spot beta test, respectively. The test then rejects the null hypothesis $H_0$ if $\widehat{T}_n(\lambda)$ falls outside the interval $(B^{-}_{\alpha}, B^{+}_{\alpha})$. Since infinitely many pairs can satisfy this condition, one can select the highest density interval (HDI), which is the narrowest interval containing the desired probability mass. By construction, this procedure delivers a test with asymptotically correct size. 

The computation of the critical values $(B^{-}_{\alpha}, B^{+}_{\alpha})$ depends on whether the distribution of $\widetilde{T}_n(\lambda)$ under $H_0$ is pivotal or not. For instance, for the return-based statistic, i.e., when $\lambda = (1, 0, ...,0)^{\top}$, the distribution of $\widetilde{T}_n(\lambda)$ admits a closed-form Student's $t$ distribution with $k-1$ degrees of freedom as shown by \cite{bollerslev2024optimalspotregressions}. For general $\lambda$, and in particular for the candlestick-based weights $\lambda^*$, $U_{n,t}(\lambda)$ depends on the unknown correlation parameter $\boldrho_t$ in a non-trivial manner, and thus the distribution of $\widetilde{T}_n(\lambda)$ under $H_0$ is generally non-pivotal. 

As a special case, the pivotal property is restored when $\beta_0 = 0$ in the null hypothesis, consequently its distribution can be simulated directly. This is because under $H_0: \beta_t = 0$, the covariance matrix $\boldc_t$ defined in Equation \eqref{eq:regression_covariance} simplifies to a diagonal matrix and consequently the $\boldvarrho_{t}$ terms in the coupling returns provided in Equation \eqref{eq:coupling_vars} cancel out, leaving the limiting variable as a function of the Brownian motion functionals only. Therefore, the quantiles of $\widetilde{T}_n(\lambda)$ can be simulated which in turn allows for feasible inference on spot betas using the proposed estimator. Indeed, this case yields an asymptotically exact size test.

I now consider the general case, i.e. $H_0: \beta_t = \beta_0$ with $\beta_0 \neq 0$. When the hypothesized value is non-zero, the distribution of $\widetilde{T}_n(\lambda)$ is generally non-pivotal, and hence direct simulation is not feasible. To restore the feasibility of inference, I follow a sup-t test approach by constructing critical values that uniformly control size across the parameter space of $\boldrho_t$. 

Specifically, let the pair $(B^{-}_{\alpha}(\boldrho), B^{+}_{\alpha}(\boldrho))$ be the HDI of $\widetilde{T}_n(\lambda)$ given some $\boldrho \in \mathcal{P}$ and significance level $\alpha$. Define
\begin{equation}
    B^{-}_{\alpha} = \inf_{\boldrho \in \mathcal{P}} B^{-}_{\alpha}(\boldrho) \quad \text{and} \quad B^{+}_{\alpha} = \sup_{\boldrho \in \mathcal{P}} B^{+}_{\alpha}(\boldrho).
\end{equation}
By construction, $(B^{-}_{\alpha}, B^{+}_{\alpha})$ contains every conditional HDI and, consequently, this guarantees that
\begin{equation}
    \inf_{\boldrho \in \mathcal{P}} \mathbb{P}\!\left( B^{-}_{\alpha} < \widetilde{T}_n(\lambda) < B^{+}_{\alpha} \,\middle|\, H_0 \right) \geq 1 - \alpha.
\end{equation}

The cutoffs $B^{-}_{\alpha}$ and $B^{+}_{\alpha}$ are the envelopes of the conditional HDIs over $\mathcal{P}$ and therefore deliver uniform asymptotic size control: the test rejects $H_0$ with probability at most $\alpha$. This uniformity comes at the cost of conservativeness, but it requires no preliminary estimation step and critical values can be tabulated in a single simulation step. Simulation results in Section \ref{sec:simulation} indicate that the resulting test is not overly conservative and delivers good power properties.

Table \ref{tab:beta_confint_combined} reports the critical values for $H_0: \beta_t = 0$ and $H_0: \beta_t = \beta_0$ with $\beta_0 \neq 0$, respectively, for range of $k$ and $\alpha$ combinations, computed via $10{,}000$ Monte Carlo simulations. For comparison, I also report the analogous critical values for the return-based beta estimator, which are obtained in closed form from the Student's $t$-distribution following \cite{bollerslev2024optimalspotregressions}. To facilitate a direct assessment of efficiency, I further report the interval widths $B^{+}_{\alpha} - B^{-}_{\alpha}$, for both estimators. 

\begin{table}[t!]
    \centering
    \renewcommand{\arraystretch}{1.3}
    \setlength{\tabcolsep}{9pt}
    \caption{\textbf{Critical values for }$H_0: \beta_t = \beta_0$ : This table reports the critical values for the hypothesis test of $H_0: \beta_t = \beta_0$ at significance levels $\alpha \in \{5\%, 10\%\}$ and for different local window sizes $k \in \{5, 10, 20\}$. The left panel shows the critical values for the return-based beta estimator derived from the Student's $t$-distribution with $k-1$ degrees of freedom. The right panel presents the critical values for the candlestick-based beta estimator computed via $10{,}000$ Monte Carlo simulations. The interval width is defined as $B^{+}_{\alpha} - B^{-}_{\alpha}$.}
    \smallskip
    \label{tab:beta_confint_combined}
    \adjustbox{max width=\textwidth}{
    \begin{tabular}{c ccc c ccc c ccc}
    \toprule
    & \multicolumn{3}{c}{\textbf{Return}} 
    & & \multicolumn{3}{c}{\textbf{Candlestick ($\beta_0=0$)}} 
    & & \multicolumn{3}{c}{\textbf{Candlestick ($\beta_0\neq0$)}} \\
    \cmidrule{2-4} \cmidrule{6-8} \cmidrule{10-12}
    $\boldsymbol{k}$ 
    & $B^-_\alpha$ & $B^+_\alpha$ & Width 
    & & $B^-_\alpha$ & $B^+_\alpha$ & Width 
    & & $B^-_\alpha$ & $B^+_\alpha$ & Width \\
    \midrule

    \multicolumn{11}{l}{\textbf{Panel A:} $\alpha=5\%$} \\
    \midrule
    5  & -2.776 & 2.776 & 5.552 & & -1.657 & 1.415 & 3.072 & & -1.695 & 1.636 & 3.332 \\
    10 & -2.262 & 2.262 & 4.524 & & -1.430 & 1.460 & 2.890 & & -1.609 & 1.568 & 3.177 \\
    20 & -2.093 & 2.093 & 4.186 & & -1.429 & 1.383 & 2.812 & & -1.567 & 1.563 & 3.130 \\

    \midrule
    \multicolumn{11}{l}{\textbf{Panel B:} $\alpha=10\%$} \\
    \midrule
    5  & -2.132 & 2.132 & 4.264 & & -1.204 & 1.262 & 2.466 & & -1.375 & 1.336 & 2.711 \\
    10 & -1.383 & 1.383 & 2.766 & & -1.203 & 1.199 & 2.402 & & -1.338 & 1.337 & 2.674 \\
    20 & -1.328 & 1.328 & 2.656 & & -1.103 & 1.244 & 2.347 & & -1.372 & 1.364 & 2.737 \\

    \bottomrule
    \end{tabular}}
\end{table}

These results indicate that the critical values are tighter for the candlestick-based beta estimator compared to the return-based estimator, across all values of $k$ and $\alpha$. Specifically, the difference in interval widths becomes more pronounced as $k$ decreases. This highlights the usefulness of my estimator, particularly in limited data scenarios.

\section{Monte Carlo Evidence} \label{sec:simulation}
In this section, I examine the performance of the candlestick-based estimator and the test through Monte Carlo experiments. The analysis proceeds in two parts. First, I investigate the efficiency of the proposed estimator by comparing its asymptotic risk against an oracle estimator and a return-based estimator. Second, I evaluate the power of the hypothesis test for spot betas introduced in Section \ref{sec:inference}.

\subsection{Efficiency of the Spot Covariance Estimator} \label{sec:risk_comparison}
To assess the efficiency of my approach, I compare the asymptotic risk of my estimator against two natural benchmarks.\footnote{It is important to note that the asymptotic risk $\widetilde{R}(\lambda; \boldrho_t)$ can be computed via simulations for any $\lambda$ and $k$ if the correlation structure is known.} The first benchmark is an infeasible oracle estimator that minimizes asymptotic risk under knowledge of the true correlation structure. This estimator serves as a theoretical upper bound for the performance of any feasible estimator, as it uses population information. The second benchmark is the return-based estimator, which relies solely on (open-to-close) returns. All three estimators belong to the class defined in Equation \eqref{eq:spotcov_estimator}, differing only in the choice of $\lambda$: the return-based estimator sets $\lambda=(1,0, ..., 0)^{\top}$, the oracle sets $\lambda=\operatorname{argmin}_{\lambda}\,\widetilde{R}(\lambda,\boldsymbol{\rho}_t)$, and the proposed estimator sets $\lambda=\lambda^*$, minimizing the average risk across correlation structures.

Table \ref{tab:asymptotic_risk} reports the asymptotic risk values for various configurations. I consider 3 different levels of correlation, $\rho \in \{0, 0.2, 0.6\}$.\footnote{These numbers represent the $10th, 50th, 90th$ percentiles of cross-section of pairwise correlations among the S\&P 500 stocks.} Moreover, the size of the local estimation window is set to $k \in \{5, 10, 20\}$. All results are based on 10,000 Monte Carlo simulations.

\begin{table}[t!]
    \centering
    \caption{\textbf{Asymptotic Risk of Estimators for Different $\rho$ Values}: The table presents the asymptotic risk of three estimators: the return-based estimator, the candlestick-based estimator, and an oracle estimator that minimizes asymptotic risk with knowledge of the true correlation structure. The results are shown for various local window sizes \( k \in \{5, 10, 20\} \) and correlation levels \( \rho \in \{0, 0.2, 0.6\} \). The asymptotic risk is computed via Monte Carlo simulations.}
    \label{tab:asymptotic_risk}
    \renewcommand{\arraystretch}{1.2} 
    \setlength{\tabcolsep}{12pt} 
    \smallskip
    \resizebox{0.999\textwidth}{!}{%
    \begin{tabular}{c c cc cc cc}
         \toprule \vspace{-0.40cm} \\
         & & \multicolumn{2}{c}{$\boldsymbol{\rho = 0}$} & \multicolumn{2}{c}{$\boldsymbol{\rho = 0.2}$} & \multicolumn{2}{c}{$\boldsymbol{\rho = 0.6}$} \\
        \cmidrule(lr){3-4} \cmidrule(lr){5-6} \cmidrule(lr){7-8}
        $\boldsymbol{k}$ & Return & Candlestick & Oracle & Candlestick & Oracle & Candlestick & Oracle \\
        \midrule \vspace{-0.25cm}\\
        5  & 1.212 & 0.427 & 0.384 & 0.429 & 0.384 & 0.440 & 0.391 \\
        10 & 0.599 & 0.240 & 0.222 & 0.241 & 0.221 & 0.249 & 0.224 \\
        20 & 0.297 & 0.128 & 0.119 & 0.128 & 0.118 & 0.134 & 0.119 \vspace{0.25cm}\\
        \bottomrule
    \end{tabular}} \vspace{0.25cm}
\end{table}

These results provide several key insights. First, my estimator achieves asymptotic risk levels that are very close to those of the oracle estimator across all values of $k$ and $\rho$, implying roughly $8-9\%$ efficiency loss. This indicates that the proposed approach effectively addresses complications arising from the distribution of the estimator being non-pivotal, resulting in an estimator that performs nearly as well as the infeasible oracle estimator. Second, the proposed estimator consistently demonstrates a significant reduction in asymptotic risk compared to the return-based estimator, particularly for smaller values of $k$. That is, for $k=5$, the candlestick-based estimator's risk is roughly 65\% lower than that of the return-based estimator, and this difference is about $55\%$ for $k=20$. This suggests that my approach effectively leverages the candlestick prices and yields more efficient estimators. With only $k=5$ observations, the proposed estimator achieves a lower risk than the return-based estimator using $k=10$ observations, demonstrating its superior efficiency in extracting information from narrow time windows. This advantage is particularly valuable in high-frequency event studies, where identification often relies on short time intervals and data are inherently limited.

Overall, this analysis reveals that incorporating candlestick information substantially improves efficiency, providing a powerful and practical alternative to traditional return-based methods. Importantly, my approach performs nearly as well as oracle estimators, leaving limited potential improvement through alternative specifications.

\subsection{Size and Power of the Spot Beta Test}
Next, I evaluate the properties of the hypothesis test for spot betas introduced in Section \ref{sec:inference} with finite sample simulations. 

To do this, I simulate high-frequency price data from a data generating process (DGP), previously studied in \cite{ bollerslev2024optimalspotregressions}.\footnote{This DGP is built on the univariate setup originally proposed by \cite{bollerslev2011estimation}. Later, it is implemented by a number of subsequent studies (e.g., \cite{bollerslev2021fixed,li2024reading}).} Specifically, this DGP assumes a two-factor structure for the market variance process: $\nu_t=V_{1,t}+V_{2,t}$ where $V_{1,t}$ and $V_{2,t}$ follow the processes:
\begin{equation}
\begin{aligned}
& d V_{1, t}=0.0128\left(0.4068-V_{1, t}\right) d t+0.0954 \sqrt{V_{1, t}}\left(\gamma d W_{1, t}+\sqrt{1-\gamma^2} d B_{1, t}\right), \\
& d V_{2, t}=0.6930\left(0.4068-V_{2, t}\right) d t+0.7023 \sqrt{V_{2, t}}\left(\gamma d W_{1, t}+\sqrt{1-\gamma^2} d B_{2, t}\right).
\end{aligned}
\end{equation}
Here, $B_{1,t}$ and $B_{2,t}$ are independent Brownian motions that are also independent of $W_{1,t}$ and $W_{2,t}$. The parameter $\gamma$ is set to $-0.7$, capturing the well-documented leverage effect in financial markets. The coefficients are set so that the first volatility factor is highly persistent, showing $2.5$-month half-life, while the second factor reverts to the mean quickly, with a half-life of a day. This allows the DGP to capture both short-term fluctuations and long-term trends in volatility. 

The idiosyncratic variance $\varsigma_t$ and the spot beta are assumed to follow:
\begin{equation}
    \beta_t = 1 + 0.25 \sin(t)^2, \quad \varsigma_t = \left(1.5 + 0.25 \sin(t)^2\right)\nu_t
\end{equation} 
where $\beta_t$ fluctuates between $1$ and $1.25$ over time and $\varsigma_t$ is set to be proportional to the market variance $\nu_t$. Finally, the price processes are generated using the representation provided in Equation \eqref{eq:reg_rep}.

I simulate the price path using an Euler scheme on a one-second grid. Then I sample the prices at one-minute frequency to construct the candlestick data, implying $\Delta_n = 1/390$ and $n=390$ intraday observations. This setup mimics the empirical applications involving high-frequency financial data. I consider three different values for the local window size, $k \in \{5, 10, 20\}$. 

For each configuration, I compute empirical rejection rates for the null hypothesis $H_0: \beta_t = 0$ at significance levels of $1\%, 5\%,$ and $10\%$ using the critical values in the middle block of Table \ref{tab:beta_confint_combined}. The results are summarized in Table \ref{tab:rejection_rates}. I also report the rejection rates for the return-based beta estimator for comparison.\footnote{Note that the return-based test is developed by \cite{bollerslev2024optimalspotregressions}. As discussed in Section \ref{sec:inference}, the corresponding test statistic is shown to be Student's t distributed.} All results are based on 10,000 Monte Carlo simulations.

\begin{table}[t!]
\centering

\caption{\textbf{Power of Tests Derived from Return and Candlestick Estimators (\%):} The table reports the rejection rates (in percent) for the hypothesis test of $H_0: \beta_t = 0$ at significance levels $\alpha \in \{1\%, 5\%, 10\%\}$ and for different local window sizes $k \in \{5, 10, 20\}$ based on simulated data. The left panel shows the rejection rates for the return-based beta estimator, while the right panel presents the rejection rates for the candlestick-based beta estimator.}
\label{tab:rejection_rates}
\renewcommand{\arraystretch}{1.3} 
\setlength{\tabcolsep}{14pt} 
\smallskip
\begin{tabular}{c c c c c c c c}
\toprule
 & \multicolumn{3}{c}{\textbf{Return}} & & \multicolumn{3}{c}{\textbf{Candlestick}} \\
\cmidrule(lr){2-4} \cmidrule(lr){6-8}
$\boldsymbol{k}$ & 1\% & 5\% & 10\% & & 1\% & 5\% & 10\% \\
\midrule
$5$  & 8.84  & 28.02 & 42.11 & & 31.49 & 58.82 & 65.67 \\
$10$ & 32.53 & 59.98 & 72.66 & & 60.46 & 84.81 & 91.22 \\
$20$ & 73.99 & 90.32 & 94.80 & & 94.95 & 99.15 & 99.44 \\
\bottomrule
\end{tabular} \vspace{0.25cm}
\end{table}

Simulation results demonstrate that, across all scenarios, the candlestick-based test consistently rejects false null hypotheses more frequently than the return-based test. This gap widens as the local window size decreases. For instance, when $k=20$ and $\alpha=5\%$, the difference in rejection rates is approximately $9\%$, increasing to $25\%$ for $k=10$ and $\alpha=5\%$. These results demonstrate a substantial reduction in false negatives, highlighting the superior power of the candlestick-based procedure.

To further illustrate the power advantage of the proposed test, Figure \ref{fig:power_curve} presents the power curves for both tests as a function of the true $\beta_t$ value. In this analysis, I use the same DGP as before but set $\beta_t$ to be constant, varying from $0$ to $2$. The null hypothesis is that $\beta_t=0$, and the empirical rejection rates are computed at a $5\%$ significance level using $10,000$ Monte Carlo simulations. The local window size is fixed at $k=10$.
\begin{figure}[t!]
    \centering
    \includegraphics[width=0.95\textwidth]{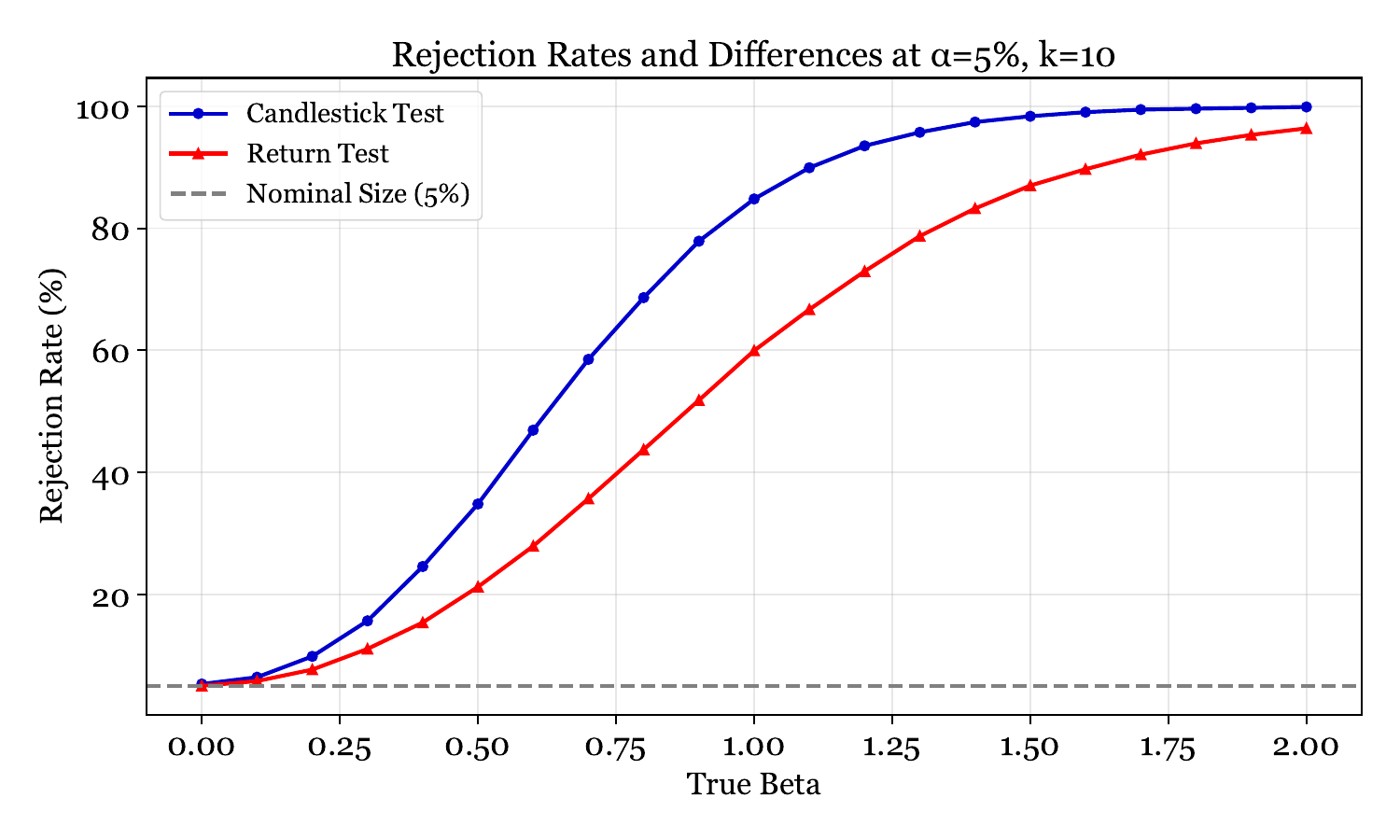}
    \caption{\textbf{Power Curves for} $H_0: \beta_t = 0\ $: The figure plots the empirical power of the return-based and candlestick-based tests as a function of the true $\beta_t$ value (assumed to be constant), with $k=10$ and a significance level of $5\%$. The power is computed using 10,000 Monte Carlo simulations.}
    \label{fig:power_curve}
\end{figure}

Figure \ref{fig:power_curve} shows the rejection rates for both return-based and candlestick-based tests. The x-axis shows how far $\beta_t$ is from the null value of $0$. Looking at the leftmost point where $\beta_t$ is set to zero, both tests appear to correctly maintain the size at approximately $5\%$. At the right-end where $\beta_t$ is far from the null, unsurprisingly, both tests achieve a power close to $100\%$. As $\beta_t$ deviates from zero, the power of both tests increases, with the candlestick-based test exhibiting a notably steeper increase. For example, when $\beta_t$ is around $1$, the candlestick-based test achieves a power of approximately $85\%$, while the return-based test is around $60\%$, indicating a $25\%$ difference.

So far, this analysis has focused on the case of a zero null hypothesis. I also evaluate the power of the tests for non-zero null hypotheses. Particularly, I consider $H_0: \beta_t = 1$. For this case, the same simulated price data is employed as before, but the critical values are now based on rightmost block of Table \ref{tab:beta_confint_combined}. Figure \ref{fig:power_curve_suptest} exhibits the power curves for both tests under this non-zero null hypothesis. A similar pattern emerges here: the candlestick-based test consistently outperforms the return-based test across almost all values of $\beta_t$, with the power gap increases up to $25\%$ as $\beta_t$ moving away from the null value of $1$. Not surprisingly, at the null, the candlestick-based test slightly underrejects, around $4\%$, while the return-based test maintains the correct size. 

Overall, these results demonstrate that candlestick-based inference substantially improves power for spot betas, particularly at moderate deviations from the null. Furthermore, I show that the advantage extends beyond the zero-null case to more general settings in which sup test approach is implemented, underscoring the practical value of the proposed test for inference on spot betas.

\begin{figure}[t!]
    \centering
    \includegraphics[width=0.95\textwidth]{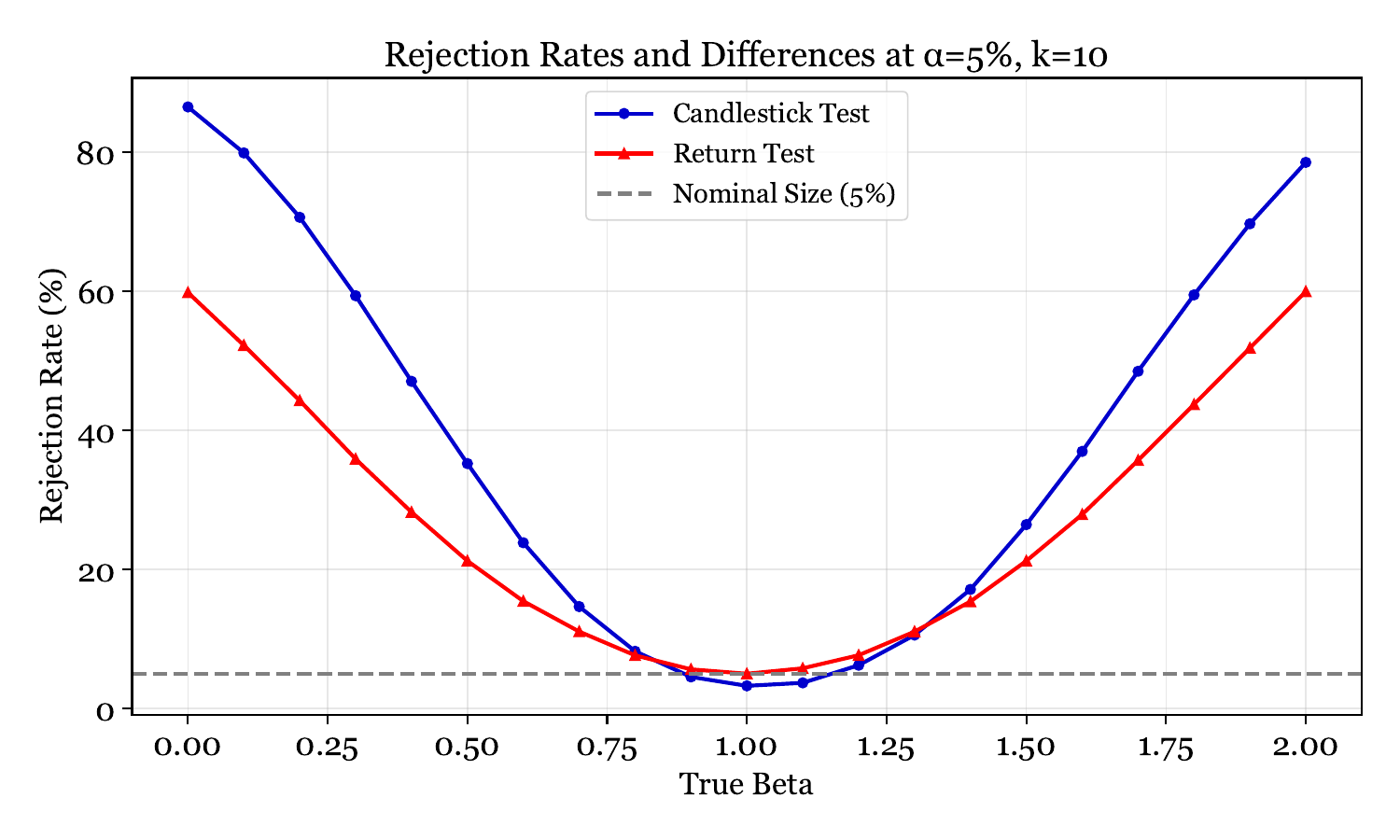}
    \caption{\textbf{Power Curves for} $H_0: \beta_t = 1\ $: The figure plots the empirical power of the return-based and candlestick-based tests as a function of the true $\beta_t$ value (assumed to be constant), with $k=10$ and a significance level of $5\%$. The power is computed using 10,000 Monte Carlo simulations.}
    \label{fig:power_curve_suptest}
\end{figure}

\section{Empirical Application: Is Bitcoin Market Neutral?} \label{sec:empirics}
To demonstrate the practical value of the proposed estimator and inference procedure, I analyze Bitcoin's market exposure, i.e., its market beta. Given the growing role of crypto assets in institutional and retail portfolios, this empirical question can have important implications for risk management and portfolio selection. Specifically, recent years have seen several developments that facilitated investment in crypto assets. In 2017, the Chicago Mercantile Exchange (CME) launched Bitcoin futures contracts, followed by the introduction of Bitcoin options in 2020. More recently, in January 2024, the U.S. Securities and Exchange Commission (SEC) approved the first Bitcoin exchange-traded fund (ETF).

Meanwhile, the risk characteristics of crypto assets remain a subject of debate among academics and practitioners. Crypto advocates often describe these assets as ``digital gold'', suggesting potential hedging benefits against aggregate market risk. In a similar vein, \cite{liu2021risks} find limited evidence of systematic exposure of crypto assets to traditional factors including the equity market portfolio. Motivated by these developments, I estimate the spot beta and then test the null of market neutrality, i.e., zero beta, using the new candlestick-based framework proposed in Section \ref{sec:inference}, offering a more granular perspective to the same empirical question.

For this empirical analysis, I collect $1$-minute candlestick observations from two prominent ETFs: SPY and IBIT. While the former is a well-known ETF that tracks the S\&P 500 index and commonly used in empirical studies, the latter is a newly launched (as of January 2024) iShares Bitcoin Trust ETF designed to track the performance of Bitcoin.\footnote{Cryptocurrency ETFs are designed to provide investors with exposure to the price movements in crypto markets. The IBIT has been the most traded one since its launch and its net asset value exceeds $\mathdollar$70 billion as of 2025.} The data is sourced from NYSE Trade and Quote database through Wharton Research Data Services(WRDS). The sample spans the entire year 2024, covering $250$ trading days and standard trading hours from $9:30$ to $16:00$.\footnote{One advantage of pairing SPY with IBIT is that both ETFs trade on regulated exchanges (NYSE Arca and NASDAQ) and thus share the same market microstructure like the same trading hours. This means that the co-movement between them resulted from genuine economic dynamics, not from differences in how or when each asset is traded. That makes the pair a clean setting for my empirical analysis.}

I estimate spot beta using non-overlapping local windows of $k=10$ one-minute candlesticks, yielding $39$ spot estimates per trading day. Then, I test the null hypothesis of zero beta $H_0:\beta_t=0$, or market neutrality, for each estimate using the test statistic developed in Section \ref{sec:inference} with a significance level of $5\%$. 

\begin{figure}[t!]
    \centering
    \includegraphics[width=0.9\textwidth]{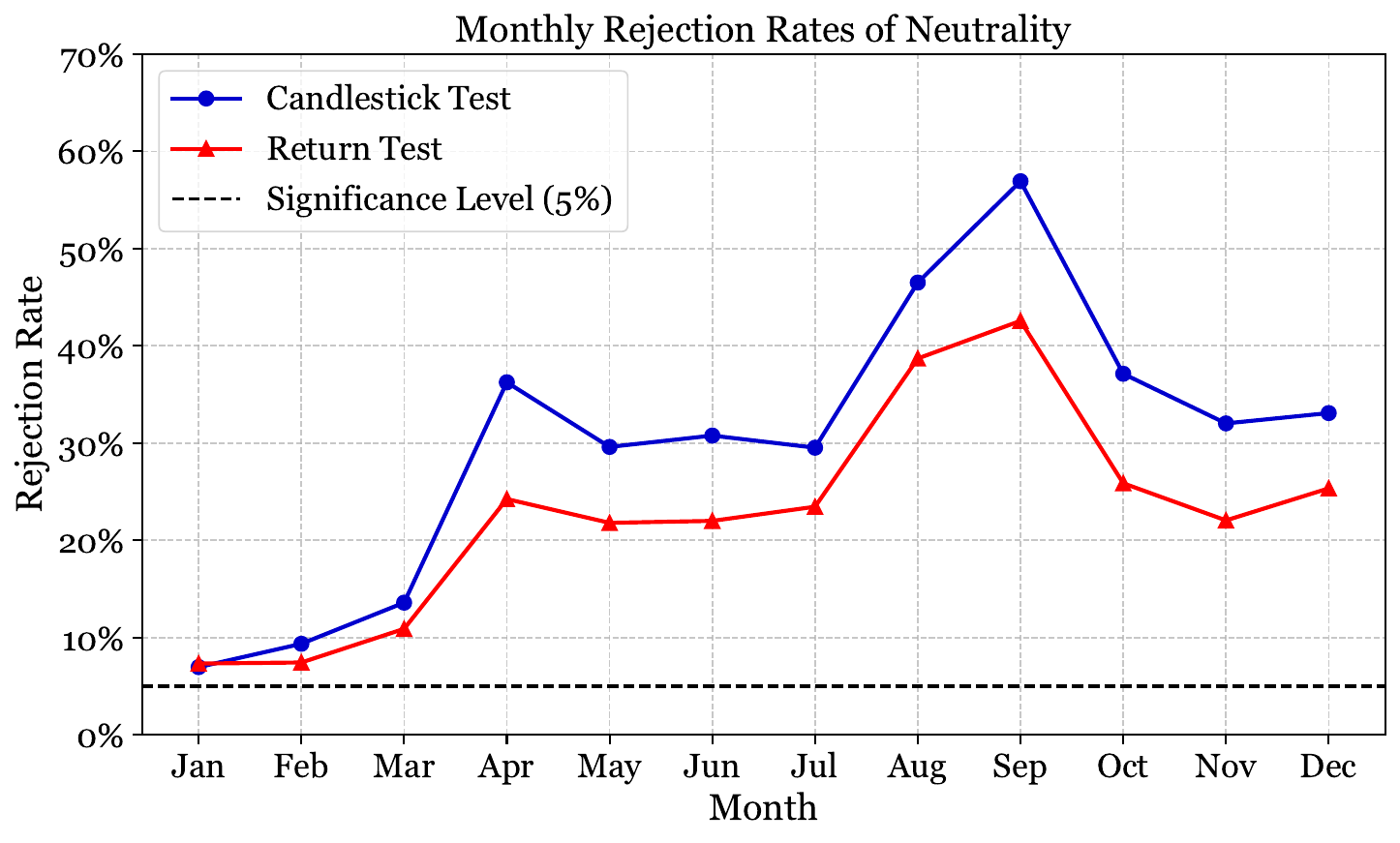}
    \caption{\textbf{Monthly Rejection Rates of the Null Hypothesis of Zero Beta for IBIT.} The figure shows the monthly rejection rates of the null hypothesis of zero beta for IBIT using candlestick-based inference framework. The sample covers the entire 2024 year, a total of $250$ days, and usual trading hours from $9:30$ to $16:00$.}
    \label{fig:ibit_beta_rejection}
\end{figure}

I start by assessing the monthly rejection rates (in terms of percentage) of the null hypothesis of market neutrality which is defined as the proportion of intraday spot beta estimates that are significantly different from zero at the 5\% level. The results are presented in Figure \ref{fig:ibit_beta_rejection}. The figure shows that the rejection rates are around $10\%$ in the first two months of 2024, later increasing to $40\%$ in mid-2024 and ending the year with a similar rate. Notably, the rejection rates are more pronounced in August and September, reaching around $60\%$. 

Interestingly, these months also coincide with a number of crucial economic events and heightened market volatility. For instance, in August 2024, Bank of Japan announced a rate hike and this was followed by global equity sell-offs. Moreover, the first week of September was marked by a series of weak production and labor market data releases, which raised concerns about a potential economic slowdown. Specifically, during that period, the ISM manufacturing index, the ADP jobs report and the non-farm payrolls came in below expectations and consequently increased market volatility.

It may be instructive to look at these days in more detail. Figure \ref{fig:ibit_spot_beta_sept36} presents the spot beta estimates and corresponding confidence intervals for September 3–6. This figure reveals that the null hypothesis of zero beta is rejected in a substantial portion of the intraday intervals, approximately $65\%$ of the time. For instance, looking at the bottom right panel, the null is rejected in $27$ out of $39$ intervals on September 6. On that day, the NFP report was released at 8:30 AM, prior to market open, and spot beta estimates were already significant and around $1.5$ at the opening. Similar patterns are observed on the other days of that week. 

As a result, this empirical analysis provides evidence that Bitcoin's market exposure may not be negligible, and it can show significant positive beta with respect to the market portfolio, particularly during periods of heightened market volatility, precisely the periods when such instruments are supposed to be most valuable in terms of risk management purposes.

\begin{figure}[t!]
    \centering
    \includegraphics[width=0.99\textwidth]{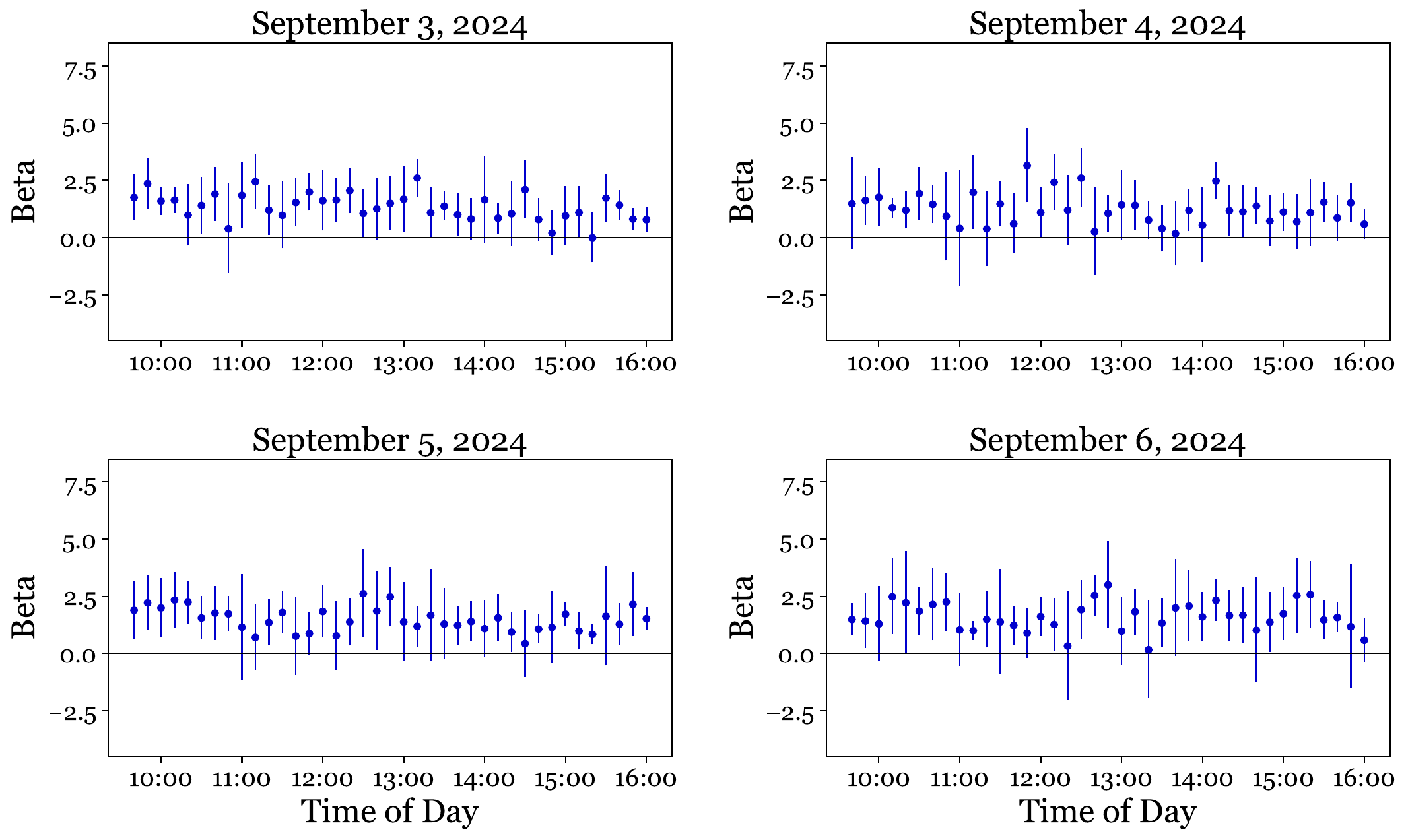}
    \caption{\textbf{Spot Beta of IBIT on September 3-6, 2024.} The figure shows the spot beta estimates of IBIT with respect to SPY using $10$ $1-min$ candlestick observations in $10-min$ frequency. The vertical lines indicate the 95\% confidence intervals.}
    \label{fig:ibit_spot_beta_sept36}
\end{figure}

\section{Conclusion} \label{sec:conclusion}
This paper develops a new framework for estimating and conducting inference on spot regressions, with a focus on spot betas, using high-frequency candlestick data, which contains high and low prices in addition to the open and close prices for each interval. The estimator is constructed by minimizing a quadratic risk function under fixed-$k$ asymptotics, where $k$ is the number of intraday observations used for local estimation. This approach yields a simple and implementable estimator that effectively combines information from all four prices. Furthermore, I develop a hypothesis testing procedure for spot betas based on the proposed estimator, where the critical values are computed by simulating the limiting distribution of the test statistic under the null hypothesis. 

Monte Carlo experiments demonstrate that the candlestick-based estimator attains asymptotic risk levels close to those of an infeasible oracle benchmark, while substantially outperforming the return-based estimator. Moreover, the candlestick-based test exhibits markedly higher power (up to $25\%$ improvement) and reduces false negatives in inference. These efficiency gains are especially valuable for high-frequency event studies, where identification relies on short time windows. In an empirical study with 1-minute candlestick data of SPY (S\&P 500 ETF) and IBIT (iShares Bitcoin Trust ETF), the method reveals significant positive market exposure of the Bitcoin ETF, especially during turbulent periods, challenging popular ``digital gold'' narrative for risk management. 


Overall, this paper highlights the practical value of candlestick data for more precise and reliable estimation and inference in spot regressions, offering researchers and practitioners a readily implementable tool for analyzing time-varying risk exposures in financial markets.

\newpage
\appendix

\renewcommand{\thesection}{Appendix \Alph{section}}
\renewcommand{\thesubsection}{\Alph{section}.\arabic{subsection}}

\section{} 

\renewcommand{\thetable}{A.\arabic{table}}  
\renewcommand{\thefigure}{A.\arabic{figure}}
\renewcommand{\theequation}{A.\arabic{equation}}

\setcounter{equation}{0} 
\setcounter{assumption}{0}
\setcounter{theorem}{0}
\setcounter{proposition}{0}

\subsection{Decomposition of the Loss Function} \label{app:loss_decomp}
The remark in the Section \ref{sec:optimal_weight} presents a decomposition of the loss function into three components plus cross terms: $L(\lambda; \boldc_t) = L_{\nu} + L_{\varsigma} + L_{\beta} + \text{cross terms}$. This decomposition is crucial for understanding the estimation methodology of this paper. Therefore, I provide further details on how this decomposition is derived and its implications for the estimation procedure, in this section.

To derive the decomposition, I start by writing the estimator as Cholesky form:
\begin{equation}
    \widehat{\boldc}_{n,t}(\lambda) = \widehat{\boldsigma}_{n,t}(\lambda) \widehat{\boldsigma}_{n,t}(\lambda)^{\top} \quad \text{where} \quad \widehat{\boldsigma}_{n,t}(\lambda) = \begin{pmatrix}
        \widehat{\nu}_t^{1/2} & 0 \\
        \widehat{\beta}_t \widehat{\nu}_t^{1/2} & \widehat{\varsigma}_t^{1/2}
    \end{pmatrix}.
\end{equation}
Now, consider multiplying $\widehat{\boldsigma}_{n,t}(\lambda)$ by $\boldsigma_t^{-1}$:
\begin{equation*}
\begin{aligned}
    \boldsigma_t^{-1}\widehat{\boldsigma}_{n,t}(\lambda) &= \begin{pmatrix}
        \nu_t^{-1/2} & 0 \\
        -\beta_t \nu_t^{-1/2} & \varsigma_t^{-1/2}
    \end{pmatrix} \begin{pmatrix}
        \widehat{\nu}_t^{1/2} & 0 \\
        \widehat{\beta}_t \widehat{\nu}_t^{1/2} & \widehat{\varsigma}_t^{1/2}
    \end{pmatrix} 
    &= \begin{pmatrix}
        (\widehat{\nu}_t/\nu_t)^{1/2} & 0 \\
        (\widehat{\beta}_t - \beta_t)(\widehat{\nu}_t/\nu_t)^{1/2} & (\widehat{\varsigma}_t/\varsigma_t)^{1/2}
    \end{pmatrix}.
\end{aligned}
\end{equation*}
To simplify the notation, define the following variables:
\begin{equation*}
    A \equiv (\widehat{\nu}_t/\nu_t)^{1/2} - 1, \quad
    B \equiv (\widehat{\varsigma}_t/\varsigma_t)^{1/2} - 1 \quad \text{and} \quad
    C \equiv (\widehat{\beta}_t - \beta_t)(\widehat{\nu}_t/\nu_t)^{1/2}
\end{equation*}
With these definitions, the multiplicative estimation error can be expressed as:
\begin{equation*}
    \boldsigma_t^{-1}\widehat{\boldc}_{n,t}(\lambda)\boldsigma_t^{-1\top} = \begin{pmatrix}
        A^2 & AB \\
        AB & B^2 + C^2
    \end{pmatrix}.
\end{equation*}
Then, the loss function can be written as:
\begin{equation*}
\begin{aligned}
    L(\lambda; \boldc_t) &= \left\| \boldsigma_t^{-1}\widehat{\boldc}_{n,t}(\lambda)\boldsigma_t^{-1\top} - I \right\|^2 \\
    &= (A^2 - 1)^2 + (B^2 + C^2 - 1)^2 + 2(AB)^2 
\end{aligned}
\end{equation*}
After some algebraic manipulation, one can further express the loss function as:
\begin{equation*}
    L(\lambda; \boldc_t) = (A^2 - 1)^2 + (C^2 - 1)^2 + \frac{B^2}{A^2}(2A^4 + A^2B^2 + 2A^2(C^2-1) )
\end{equation*}
Note that
\begin{equation*}
    (A^2 - 1)^2 = \left(\frac{\widehat{\nu}_t - \nu_t}{\nu_t}\right)^2,\quad (C^2 - 1)^2 = \left(\frac{\widehat{\varsigma}_t - \varsigma_t}{\varsigma_t}\right)^2 \quad \text{and} \quad \frac{B^2}{A^2} = \left(\frac{\widehat{\beta}_t - \beta_t}{\varsigma_t/\nu_t}\right)^2.
\end{equation*}
Finally, the loss function can be decomposed as:
 \begin{equation*}
    L(\lambda; \boldc_t) = \underbrace{\left(\frac{\widehat{\nu}_t - \nu_t}{\nu_t}\right)^2}_{\equiv L_{\nu}} + \underbrace{\left(\frac{\widehat{\varsigma}_t - \varsigma_t}{\varsigma_t}\right)^2}_{\equiv L_{\varsigma}} + \underbrace{\left(\frac{\widehat{\beta}_t - \beta_t}{\varsigma_t/\nu_t}\right)^2}_{\equiv L_{\beta}} \cdot ~ \psi
\end{equation*}
where $L_{\nu}$, $L_{\varsigma}$ and $L_{\beta}$ represent the squared relative errors of the market variance, idiosyncratic variance and spot beta, respectively, and $\psi = 2A^4 + A^2B^2 + 2A^2(C^2-1)$ is a scaling factor and function of the estimation errors of all three parameters. 

This decomposition reveals that the loss function can be expressed as a sum of weighted squared errors of the three parameters of interest: the market variance $\nu_t$, the idiosyncratic variance $\varsigma_t$ and the spot beta $\beta_t$. Thus, minimizing the overall loss function $L(\lambda; \boldc_t)$ effectively balances tradeoffs between these different components.

\subsection{Proof of Proposition \ref{prop:spotcov_coupling}} \label{app:prop_1_proof}
In this section, I provide the proof of Proposition \ref{prop:spotcov_coupling} which builds on the coupling techniques developed in \cite{jacod2021volatility} and \cite{bollerslev2021fixed}. The former studies the approximation of the estimation error for the spot covariance estimator when $k$ increases with $n$, whereas the latter focuses on coupling the spot volatility estimator in a fixed-$k$ framework. As noted in the main text, I consider a fixed-$k$ setup and thus my work is in the same spirit as \cite{bollerslev2021fixed}. To ensure consistency with the existing literature, I closely follow the notation introduced in the aforementioned papers.

I rewrite the Assumption \ref{asmp:spotcov_asmp} and Proposition \ref{prop:spotcov_coupling} here for convenience:
\begin{assumption} \label{app:asmp:spotcov_asmp}
    Suppose that $\boldsymbol{X}_t$ has the form in Equation \eqref{eq:itosemi} and there exists a sequence $(T_m)_{m \geq 1}$ of stopping times increasing to infinity and the following conditions hold for each $m \geq 1$:
    \begin{itemize}
        \item[(i)] $\|\boldsymbol{b}_t\| + \|\boldsymbol{\sigma}_t\| + \|\boldsigma_t^{-1}\| \leq K_m$ for some constant $K_m$ for all $t \in [0, T_m]$;
        \item[(ii)] $\mathbb{E}\left[ \|\boldsigma_{t \wedge T_m}-\boldsigma_{s\wedge T_m} \|^2 \right] \leq K_m |t-s|$ for all $t, s \in [0, T_m]$.
    \end{itemize}
\end{assumption}   

\begin{proposition} \label{app:prop:spotcov_coupling}
    Suppose that Assumption \ref{asmp:spotcov_asmp} holds. Fix any $t \in [0,T]$. For any $k \geq 1$ and $\boldLambda$, the following holds as $\Delta_n \to 0$:

    \begin{equation} 
        \left\| \boldsymbol{\sigma}_t^{-1}\widehat{\boldsymbol{c}}_{n, t}(\lambda)\boldsymbol{\sigma}_t^{-\top} -  U_{n,t} \right\| = o_p(1)
    \end{equation}
    where $U_{n,t} = \frac{1}{k}\sum_{i\in \mathcal{I}_{n,t}} \Big\{\lambda_1 \boldzeta_{i,r} \boldzeta_{i,r}^{\top} + \lambda_2 \boldzeta_{i,a} \boldzeta_{i,a}^{\top} + \lambda_3 \boldzeta_{i,w} \boldzeta_{i,w}^{\top} \Big\}$ and, for any $i\in \mathcal{I}_{n,t}$,
    \begin{equation}
    \begin{array}{rcl}
        \boldsymbol{\zeta}_{i, r} &\equiv&  
         \frac{\boldW_{i\Delta_n} - \boldW_{(i-1)\Delta_n}}{\sqrt{\Delta_n}} \\
         \boldsymbol{\zeta}_{i, a} &\equiv& 
          \boldvarrho_{t}^{-1}\underset{\tau \in \mathcal{T}_i}{\sup \:} \boldvarrho_{t} \left(\frac{\boldW_{\tau} - \boldW_{(i-1)\Delta_n}}{\sqrt{\Delta_n}}\right) + \boldvarrho_{t}^{-1}\underset{\tau \in \mathcal{T}_i}{\inf \:} \boldvarrho_{t}   \left(\frac{\boldW_{\tau} - \boldW_{(i-1)\Delta_n}}{\sqrt{\Delta_n}}\right) - \left(\frac{\boldW_{i\Delta_n} - \boldW_{(i-1)\Delta_n}}{\sqrt{\Delta_n}}\right) \\
         \boldsymbol{\zeta}_{i, w} &\equiv& 
          \boldvarrho_{t}^{-1}\underset{\tau \in \mathcal{T}_i}{\sup \:} \boldvarrho_{t}   \left(\frac{\boldW_{\tau} - \boldW_{(i-1)\Delta_n}}{\sqrt{\Delta_n}}\right) - \boldvarrho_{t}^{-1}\underset{\tau \in \mathcal{T}_i}{\inf \:} \boldvarrho_{t}   \left(\frac{\boldW_{\tau} - \boldW_{(i-1)\Delta_n}}{\sqrt{\Delta_n}}\right) \\
    \end{array}
    \end{equation}
    with $\boldvarrho_t$ being the square root of spot correlation matrix $\boldsymbol{\rho}_t$, i.e., $\boldsymbol{\rho}_t = \boldvarrho_t \boldvarrho_t^\top$. In explicit terms, $\boldsymbol{\rho}_t = \operatorname{diag}(\boldc_t)^{-\frac{1}{2}}\boldc_t \operatorname{diag}(\boldc_t)^{-\frac{1}{2}}$ and $\boldvarrho_t = \operatorname{diag}(\boldc_t)^{-\frac{1}{2}} \boldsigma_t$ where $\operatorname{diag}(\boldc_t)$ is a diagonal matrix with the same diagonal elements as $\boldc_t$.
\end{proposition}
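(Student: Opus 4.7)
The plan is to adapt the coupling machinery of \cite{jacod2021volatility}, implemented under fixed-$k$ asymptotics as in \cite{bollerslev2021fixed}, to the multivariate candlestick setting. The argument has three stages: localization, a strong approximation of price increments by a frozen-volatility Brownian motion, and propagation of that approximation through the element-wise sup/inf operators that define $\boldh_i$ and $\boldl_i$.

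\textbf{Localization and strong approximation.} Using the sequence $(T_m)$ in Assumption \ref{app:asmp:spotcov_asmp}, a standard localization reduces the problem to the case in which $\|\boldsymbol{b}_s\|$, $\|\boldsigma_s\|$, and $\|\boldsigma_s^{-1}\|$ are uniformly bounded on $[0,T]$ and $\boldsigma$ is globally $L^2$-$1/2$-Hölder continuous. On each interval $\mathcal{T}_i$ with $i\in\mathcal{I}_{n,t}$, I decompose
$$\boldsymbol{X}_\tau-\boldsymbol{X}_{(i-1)\Delta_n}=\int_{(i-1)\Delta_n}^{\tau}\boldsymbol{b}_s\,ds+\boldsigma_t\bigl(\boldW_\tau-\boldW_{(i-1)\Delta_n}\bigr)+\int_{(i-1)\Delta_n}^{\tau}(\boldsigma_s-\boldsigma_t)\,d\boldW_s.$$
The drift is $O(\Delta_n)$ in sup norm, while the Burkholder-Davis-Gundy inequality combined with the Hölder bound and the fact that $|s-t|\leq(k+1)\Delta_n$ for $s\in\bigcup_{i\in\mathcal{I}_{n,t}}\mathcal{T}_i$ controls the $L^2$-norm of the remainder's supremum by $O(\sqrt{k}\,\Delta_n)$. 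Because $k$ is fixed, both contributions are $o_p(\sqrt{\Delta_n})$ uniformly in $\tau\in\mathcal{T}_i$.

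\textbf{The main obstacle: passing $\boldsigma_t^{-1}$ through sup and inf.} For open-to-close returns, linearity and the display above immediately yield $\boldsigma_t^{-1}\boldr_i=\boldzeta_{i,r}+o_p(1)$. The substantive difficulty, and what dictates the form of $\boldzeta_{i,a}$ and $\boldzeta_{i,w}$, is that element-wise sup and inf do not commute with multiplication by a general matrix. The resolution is the factorization $\boldsigma_t=\operatorname{diag}(\boldc_t)^{1/2}\boldvarrho_t$, whose left factor is diagonal with positive entries and therefore commutes coordinatewise with sup and inf. Substituting the strong approximation of the previous paragraph into $\sup_{\tau\in\mathcal{T}_i}(\boldsymbol{X}_\tau-\boldsymbol{X}_{(i-1)\Delta_n})$ and absorbing the remainder via the coordinatewise inequality $|\sup f-\sup g|\leq\sup|f-g|$ gives
$$\boldsigma_t^{-1}\sup_{\tau\in\mathcal{T}_i}(\boldsymbol{X}_\tau-\boldsymbol{X}_{(i-1)\Delta_n})=\boldvarrho_t^{-1}\sup_{\tau\in\mathcal{T}_i}\boldvarrho_t(\boldW_\tau-\boldW_{(i-1)\Delta_n})+o_p(\sqrt{\Delta_n}),$$
and the analogous identity for the infimum, which is exactly why $\boldvarrho_t$ (not $\boldsigma_t$) appears in $\boldzeta_{i,a}$ and $\boldzeta_{i,w}$. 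Normalizing by $\sqrt{\Delta_n}$ and taking the linear combinations that define $\bolda_i$ and $\boldw_i$ in \eqref{eq:candlestick_range} produces $\boldsigma_t^{-1}\bolda_i=\boldzeta_{i,a}+o_p(1)$ and $\boldsigma_t^{-1}\boldw_i=\boldzeta_{i,w}+o_p(1)$.

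\textbf{Assembly.} Since $\boldsigma_t^{-1}\widehat{\boldc}_{n,t}(\lambda)\boldsigma_t^{-\top}$ is a sum over the fixed number $k$ of intervals of quadratic forms $\lambda_j(\boldsigma_t^{-1}\boldsymbol{v}_i)(\boldsigma_t^{-1}\boldsymbol{v}_i)^\top$ for $\boldsymbol{v}_i\in\{\boldr_i,\bolda_i,\boldw_i\}$, the continuous mapping theorem together with tightness of the limiting variables $\boldzeta_{i,\cdot}$ transfers the componentwise approximations just obtained to the full expression, yielding the required $o_p(1)$ bound. The only nontrivial step in the whole argument is the sup/inf manipulation in the previous paragraph; the rest is bookkeeping made easy by the fact that $k$ is held fixed, so all sums carry $o_p(1)$ errors without further work.
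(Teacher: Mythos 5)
Your proposal is correct and follows essentially the same route as the paper's proof: localization, a strong approximation of the within-interval increments by frozen-volatility Brownian increments via the drift bound and Burkholder--Davis--Gundy, the inequality $|\sup f-\sup g|\leq\sup|f-g|$ combined with the factorization $\boldsigma_t=\operatorname{diag}(\boldc_t)^{1/2}\boldvarrho_t$ (whose diagonal factor commutes with element-wise sup/inf, which is exactly why $\boldvarrho_t$ appears), and assembly by the continuous mapping theorem under fixed $k$. The only difference is organizational: the paper couples in two stages (first to $\boldsigma_{(i-1)\Delta_n}$-scaled increments, then to $\boldsigma_t$-scaled ones using $\|\boldsigma_t-\boldsigma_{(i-1)\Delta_n}\|=O_p(\Delta_n^{1/2})$), whereas you freeze at $t$ directly; both yield the same $O_p(\Delta_n^{1/2})$ rates.
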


\begin{proof}
    Fix $k \geq 1$ and $\lambda$. Let $K$ denote a generic positive constant. As is common in the spot estimation literature, one can strengthen Assumption \ref{asmp:spotcov_asmp} by assuming the conditions hold with $T_m = \infty$, which can be justified by a standard localization argument (see \cite{jacod2012discretization} for details).

    I begin by writing out the explicit expressions for the candlestick returns:
    \begin{equation} \label{app:eq:candlestick_returns}
        \begin{array}{rcl}
            \boldr_i & \equiv & \Delta_n^{-\frac{1}{2}} \left(\int_{(i-1)\Delta_n}^{i\Delta_n} \boldsymbol{b}_s ds + \int_{(i-1)\Delta_n}^{i\Delta_n} \boldsymbol{\sigma}_s d \boldW_s \right) \vspace{1em}\\
            \boldh_i & \equiv & \Delta_n^{-\frac{1}{2}} \left( \sup_{t \in \mathcal{T}_{n,i}} \left( \int_{(i-1)\Delta_n}^{t} \boldsymbol{b}_s ds + \int_{(i-1)\Delta_n}^{t} \boldsymbol{\sigma}_s d \boldW_s \right) \right) \vspace{1em}\\
            \boldl_i & \equiv & \Delta_n^{-\frac{1}{2}} \left( \inf_{t\in \mathcal{T}_{n,i}} \left( \int_{(i-1)\Delta_n}^{t} \boldsymbol{b}_s ds + \int_{(i-1)\Delta_n}^{t} \boldsymbol{\sigma}_s d \boldW_s \right) \right)
        \end{array}
    \end{equation}
    I also introduce the following definitions:
    \begin{equation} \label{app:eq:coupling_returns}
        \begin{array}{rcl}
            \boldr_i' & \equiv & \boldsigma_{(i-1)\Delta_n}\left(\frac{\boldW_{i\Delta_n}-\boldW_{(i-1)\Delta_n}}{\sqrt{\Delta_n}}\right)\vspace{1em}\\
            \boldh_i' & \equiv & \sup_{t\in \mathcal{T}_{n, i}} \boldsigma_{(i-1)\Delta_n}\left(\frac{\boldW_{t}-\boldW_{(i-1)\Delta_n}}{\sqrt{\Delta_n}}\right)\vspace{1em}\\
            \boldl_i' & \equiv & \inf_{t\in \mathcal{T}_{n, i}} \boldsigma_{(i-1)\Delta_n}\left(\frac{\boldW_{t}-\boldW_{(i-1)\Delta_n}}{\sqrt{\Delta_n}}\right)
        \end{array}
    \end{equation}
    which serve as the coupling variables for the candlestick returns. Similar variables can be defined for the range and asymmetry variables:
    \begin{equation} 
        \begin{array}{rcl}
            \bolda_i' &\equiv& \boldh_i' + \boldl_i' - \boldr_i' \\
            \boldw_i' &\equiv& \boldh_i' - \boldl_i'
        \end{array}
    \end{equation}

    The proof consists of two steps. The first step controls how well the coupling returns in Equation \eqref{app:eq:coupling_returns} approximate the candlestick returns in Equation \eqref{app:eq:candlestick_returns}. The second step combines these results with the continuous mapping theorem to reach the desired conclusion. Before proceeding to the first step, I derive useful intermediate results.

    By Assumption \ref{asmp:spotcov_asmp}, it is easy to see that: 
    \begin{equation} \label{eq:app:drift_bound}
        \left \| \int_{(i-1)\Delta_n}^{i\Delta_n} \boldsymbol{b}_s ds \right \| \leq \int_{(i-1)\Delta_n}^{i\Delta_n} \| \boldsymbol{b}_s \| ds = O_p(\Delta_n).
    \end{equation}
    Moreover, the Burkholder-Davis-Gundy inequality and Assumption \ref{asmp:spotcov_asmp} imply that:
    \begin{equation}
        \begin{array}{rcl}
            \mathbb{E}\left[ \left\| \int_{(i-1)\Delta_n}^{i\Delta_n}(\boldsymbol{\sigma}_s - \boldsigma_{(i-1)\Delta_n}) d\boldW_s \right\|^2 \right] &\leq& K \Delta_n \mathbb{E}\left[ \int_{(i-1)\Delta_n}^{i\Delta_n} \|\boldsymbol{\sigma}_s - \boldsigma_{(i-1)\Delta_n}\|^2 ds \right] \\
            &\leq& K \Delta_n^2.            
        \end{array}
    \end{equation}
    Further, we can deduce that:
    \begin{equation} \label{eq:app:volatility_path_bound}
        \sup_{t \in \mathcal{T}_{n,i}} \left\| \int_{(i-1)\Delta_n}^{t}(\boldsymbol{\sigma}_s - \boldsigma_{(i-1)\Delta_n}) d\boldW_s \right\| = O_p(\Delta_n).
    \end{equation}

    \textit{Step 1:} I now establish approximation results for $\boldr_i$, $\boldh_i$ and $\boldl_i$ separately. I start with the return:
    \begin{equation}
        \begin{array}{rcl}
            \| \boldr_i - \boldr_i' \| &\leq & \left \| \Delta_n^{-\frac{1}{2}}\int_{(i-1)\Delta_n}^{i\Delta_n} \boldsymbol{b}_s ds \right \|+ \left \| \Delta_n^{-\frac{1}{2}} \int_{(i-1)\Delta_n}^{i\Delta_n} (\boldsymbol{\sigma}_s - \boldsigma_{(i-1)\Delta_n}) d\boldW_s \right \| \vspace{1em} \\
            &=& O_p(\Delta_n^{\frac{1}{2}}).
        \end{array}
    \end{equation}
    where the first line directly follows from the triangle inequality and the second line uses above intermediate results in Equation \eqref{eq:app:drift_bound} and \eqref{eq:app:volatility_path_bound}. For the high return, we have:
    \begin{equation}
        \begin{array}{rcl}
            \left\| \boldh_i - \boldh_i' \right\| 
            &\equiv& \Delta_n^{-\frac{1}{2}} \Big\| 
            \sup_{t \in \mathcal{T}_{n,i}} \Big( \int_{(i-1)\Delta_n}^{t} \boldsymbol{b}_s ds + \int_{(i-1)\Delta_n}^{t} \boldsymbol{\sigma}_s d\boldW_s \Big) \\
            && \qquad - \sup_{t \in \mathcal{T}_{n,i}} \boldsigma_{(i-1)\Delta_n} (\boldW_t - \boldW_{(i-1)\Delta_n}) \Big\| \vspace{1em}\\
            &\leq& \Delta_n^{-\frac{1}{2}} \sup_{t\in \mathcal{T}_{n,i}} \left\| \int_{(i-1)\Delta_n}^{t} \boldsymbol{b}_s ds + \int_{(i-1)\Delta_n}^{t} (\boldsymbol{\sigma}_s - \boldsigma_{(i-1)\Delta_n}) d\boldW_s \right\| \vspace{1em} \\
            &\leq& \Delta_n^{-\frac{1}{2}} \int_{(i-1)\Delta_n}^{i\Delta_n} \|\boldsymbol{b}_s\| ds + \sup_{t\in \mathcal{T}_{n,i}} \left\| \int_{(i-1)\Delta_n}^{t} (\boldsymbol{\sigma}_s - \boldsigma_{(i-1)\Delta_n}) d\boldW_s \right\| \vspace{1em}\\
            &=& O_p(\Delta_n^{\frac{1}{2}}).
        \end{array}
    \end{equation}
    The first two lines are obviously implications of $\sup$ definition. Similarly, the last line follows from Equation \ref{eq:app:drift_bound} and \ref{eq:app:volatility_path_bound}. Finally, one can deduce the same inequality for the low return: 
    \begin{equation}
        \left \| \boldl_i - \boldl_i' \right \| = O_p(\Delta_n^{\frac{1}{2}}),
    \end{equation}
    and also for the asymmetry and range variables:
    \begin{equation}
        \begin{array}{rcl}
            \| \bolda_i - \bolda_i' \| &\leq& \| \boldh_i - \boldh_i' \| + \| \boldl_i - \boldl_i' \| + \| \boldr_i - \boldr_i' \| = O_p(\Delta_n^{\frac{1}{2}}) \vspace{1em}\\
            \| \boldw_i - \boldw_i' \| &\leq& \| \boldh_i - \boldh_i' \| + \| \boldl_i - \boldl_i' \| = O_p(\Delta_n^{\frac{1}{2}}).
        \end{array}
    \end{equation}

    Furthermore, I claim that 
     \begin{equation}
        \begin{array}{rcl}
            \Big \| \boldsigma_t^{-1} \boldr_i^{'} - \boldzeta_{i,r} \Big\| & = & O_p(\Delta_n^{1/2}) \vspace{1em}\\
            \Big \| \boldsigma_t^{-1} \bolda_i^{'} - \boldzeta_{i,a} \Big \| & = & O_p(\Delta_n^{1/2}) \vspace{1em}\\
            \Big \| \boldsigma_t^{-1} \boldw_i^{'} - \boldzeta_{i,w} \Big \| & = & O_p(\Delta_n^{1/2}).
        \end{array}
    \end{equation}
    The first line directly follows from Assumption \ref{asmp:spotcov_asmp}. Note that $|i\Delta_n - t| \to 0$ for any $i \in \mathcal{I}_{n,t}$ as $\Delta_n \to 0$ and this implies $\| \boldsigma_t - \boldsigma_{(i-1)\Delta_n}\| = O_p(\Delta_n^{1/2})$. For notational convenience, I only consider the third line and the same steps can be adopted for the second line as well. Specifically, one can write:
    \begin{equation}
        \begin{array}{rcl}
            \| \boldsigma^{-1}_t \boldw_i^{'} - \boldzeta_{i, w} \| & \leq & \| \boldsigma_t^{-1} \| \cdot \| \boldw_i^{'} - \boldsigma_t\boldzeta_{i, w}\| \vspace{1em}\\
            & = & \| \boldsigma_t^{-1} \| \cdot \Big \| \boldw_i^{'} - \operatorname{diag}(\boldc_t)^{1/2}\sup_{\tau, s\in \mathcal{T}_{n,i}} \boldvarrho_{t} \left(\frac{\boldW_{\tau} - \boldW_s}{\sqrt{\Delta_n}}\right) \Big \| \vspace{1em}\\
            & = & \| \boldsigma_t^{-1} \| \cdot \Big \| \boldw_i^{'} - \sup_{\tau, s\in \mathcal{T}_{n,i}} \boldsigma_{t} \left(\frac{\boldW_{\tau} - \boldW_s}{\sqrt{\Delta_n}}\right) \Big \| \vspace{1em}\\
            & = & \| \boldsigma_t^{-1} \| \cdot \Big \|\sup_{\tau, s\in \mathcal{T}_{n,i}} (\boldsigma_{t}-\boldsigma_{(i-1)\Delta_n}) \left(\frac{\boldW_{\tau} - \boldW_s}{\sqrt{\Delta_n}}\right) \Big \| \vspace{1em}\\
            & = & O_p(\Delta_n^{1/2})\\   
        \end{array}
    \end{equation}
    where the first line follows from sub-multiplicative property of matrix norm, the second and third lines use the definition of $\boldzeta_{i,w}$ and $\boldvarrho_t$, the fourth line is a direct implication of $\sup$ definition and the last line uses Assumption \ref{asmp:spotcov_asmp} and properties of Brownian motion.

    \textit{Step 2:} Rewrite the main statement of the proposition as:
    \begin{align}
        \left\| \boldsymbol{\sigma}_t^{-1}\widehat{\boldsymbol{c}}_{n, t}(\lambda)\boldsymbol{\sigma}_t^{-\top} -  U_{n,t} \right\| 
        &= \Bigg\| 
            \frac{1}{k} \sum_{i \in \mathcal{I}_{n,t}} \left\{ \lambda_1 (\boldsymbol{\sigma}_t^{-1}\boldsymbol{r}_i)(\boldsymbol{\sigma}_t^{-1}\boldsymbol{r}_i)^{\top} - \boldsymbol{\zeta}_{i,r}\boldsymbol{\zeta}_{i,r}^{\top} \right\} \nonumber \\
        &\quad + \frac{1}{k} \sum_{i \in \mathcal{I}_{n,t}} \left\{ \lambda_2 (\boldsymbol{\sigma}_t^{-1}\boldsymbol{a}_i)(\boldsymbol{\sigma}_t^{-1}\boldsymbol{a}_i)^{\top} - \boldsymbol{\zeta}_{i,a}\boldsymbol{\zeta}_{i,a}^{\top} \right\} \nonumber \\
        &\quad + \frac{1}{k} \sum_{i \in \mathcal{I}_{n,t}} \left\{ \lambda_3 (\boldsymbol{\sigma}_t^{-1}\boldsymbol{w}_i)(\boldsymbol{\sigma}_t^{-1}\boldsymbol{w}_i)^{\top} - \boldsymbol{\zeta}_{i,w}\boldsymbol{\zeta}_{i,w}^{\top} \right\}
        \Bigg\|
    \end{align}
    Using the results from Step 1, it follows that the terms in curly brackets are $O_p(\Delta_n^{1/2})$ for any $i \in \mathcal{I}_{n,t}$ and fixed $\lambda = (\lambda_1, \lambda_2, \lambda_3)^{\top}$. Therefore, the entire expression is $O_p(\Delta_n^{1/2}) = o_p(1)$. This completes the proof.

\end{proof}

\subsection{Proof of Proposition \ref{prop:test_coupling}} \label{app:prop_2_proof}
Now, I provide the proof of Proposition \ref{prop:test_coupling}, which establishes the coupling result for the test statistic defined as:
$$
    \widehat{T}_n = \frac{\sqrt{k-1}\left (\widehat{\beta}_{n,t} - \beta_t\right )}{\sqrt{\widehat{\varsigma}_{n,t} / \widehat{\nu}_{n,t}}}.
$$
The proof is based on the algebraic manipulations of the previous proposition. For convenience, I restate the proposition here:
\begin{proposition} \label{app:prop:test_coupling}
    Under the conditions of Proposition 1, for any fixed $k \geq 2$ and $\lambda$, the following holds as $\Delta_n \to 0$:
    $$
        |\widehat{T}_n - \widetilde{T}_n| = o_p(1) \quad \text{where} \quad \widetilde{T}_n \equiv \frac{\sqrt{k-1}[U_{n, t}^{-1}]_{12}}{\sqrt{[U_{n, t}^{-1}]_{11}[U_{n, t}^{-1}]_{22}-[U_{n, t}^{-1}]_{12}^2}}.
    $$
\end{proposition}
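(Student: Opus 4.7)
The plan is to rewrite $\widehat{T}_n$ as a smooth function of the rotated estimation error matrix $\widehat{M}_{n,t} \equiv \boldsymbol{\sigma}_t^{-1}\widehat{\boldsymbol{c}}_{n,t}(\lambda)\boldsymbol{\sigma}_t^{-\top}$, observe that the identical functional evaluated at $U_{n,t}$ reproduces $\widetilde{T}_n$, and then transfer the coupling in Proposition~\ref{prop:spotcov_coupling} via the continuous mapping theorem.

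First I would carry out the algebraic reduction. Substituting the explicit lower-triangular form of $\boldsymbol{\sigma}_t$ from Equation~\eqref{eq:regression_covariance} into $\widehat{\boldsymbol{c}}_{n,t}(\lambda) = \boldsymbol{\sigma}_t \widehat{M}_{n,t} \boldsymbol{\sigma}_t^{\top}$ and expanding yields the three identities
\begin{equation*}
[\widehat{\boldsymbol{c}}_{n,t}]_{11} = \nu_t [\widehat{M}_{n,t}]_{11}, \quad [\widehat{\boldsymbol{c}}_{n,t}]_{12}-\beta_t[\widehat{\boldsymbol{c}}_{n,t}]_{11} = \sqrt{\nu_t\varsigma_t}\,[\widehat{M}_{n,t}]_{12}, \quad \det(\widehat{\boldsymbol{c}}_{n,t}) = \nu_t\varsigma_t\det(\widehat{M}_{n,t}).
\end{equation*}
Combining these with $\widehat{\beta}_{n,t} = [\widehat{\boldsymbol{c}}_{n,t}]_{12}/[\widehat{\boldsymbol{c}}_{n,t}]_{11}$ and the identity $\widehat{\varsigma}_{n,t}/\widehat{\nu}_{n,t} = \det(\widehat{\boldsymbol{c}}_{n,t})/[\widehat{\boldsymbol{c}}_{n,t}]_{11}^{2}$ gives
\begin{equation*}
\widehat{\beta}_{n,t}-\beta_t = \sqrt{\varsigma_t/\nu_t}\,\frac{[\widehat{M}_{n,t}]_{12}}{[\widehat{M}_{n,t}]_{11}}, \qquad \frac{\widehat{\varsigma}_{n,t}}{\widehat{\nu}_{n,t}} = \frac{\varsigma_t}{\nu_t}\cdot\frac{\det(\widehat{M}_{n,t})}{[\widehat{M}_{n,t}]_{11}^{2}}.
\end{equation*}
The nuisance scale $\varsigma_t/\nu_t$ cancels in the ratio defining $\widehat{T}_n$, leaving the clean representation $\widehat{T}_n = \sqrt{k-1}\,[\widehat{M}_{n,t}]_{12}/\sqrt{\det(\widehat{M}_{n,t})}$ on the event $[\widehat{M}_{n,t}]_{11}>0$. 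The standard $2\times 2$ adjugate identities $[M^{-1}]_{12} = -[M]_{12}/\det(M)$ and $[M^{-1}]_{11}[M^{-1}]_{22}-[M^{-1}]_{12}^{2}=1/\det(M)$ then show that this coincides, up to sign convention, with $\sqrt{k-1}\,[\widehat{M}_{n,t}^{-1}]_{12}/\sqrt{[\widehat{M}_{n,t}^{-1}]_{11}[\widehat{M}_{n,t}^{-1}]_{22}-[\widehat{M}_{n,t}^{-1}]_{12}^{2}}$. Hence $\widehat{T}_n$ and $\widetilde{T}_n$ are literally the same continuous functional evaluated at $\widehat{M}_{n,t}$ and $U_{n,t}$, respectively.

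Next I would transfer the limit. Proposition~\ref{prop:spotcov_coupling} directly supplies $\|\widehat{M}_{n,t} - U_{n,t}\| = o_p(1)$. On the event that $U_{n,t}$ is invertible with strictly positive determinant — which, for $k\geq 2$ and the non-negative weights $\lambda^\ast$ of Equation~\eqref{eq:lambda_star_values}, holds almost surely because $U_{n,t}$ is a sum of at least two independent rank-one outer products of a.s.\ linearly independent Gaussian-type vectors — the map $M \mapsto [M]_{12}/\sqrt{\det(M)}$ is continuous in a neighborhood of $U_{n,t}$. The continuous mapping theorem then immediately yields $|\widehat{T}_n - \widetilde{T}_n| = o_p(1)$.

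The one subtlety requiring extra care is ensuring that $\det(\widehat{M}_{n,t})$ and $\det(U_{n,t})$ stay uniformly bounded away from zero with probability tending to one, so that both $\sqrt{\det(\cdot)}$ and the inversion map really act continuously on the relevant neighborhood. For the positive weights in Equation~\eqref{eq:lambda_star_values}, $U_{n,t}$ is a.s.\ positive definite whenever $k\geq 2$, and a standard localization argument (combined with the non-degeneracy of $\boldsymbol{\sigma}_t$ guaranteed by Assumption~\ref{asmp:spotcov_asmp}(i)) absorbs the vanishing event on which $\widehat{M}_{n,t}$ could be ill-conditioned. Once this well-conditioning is in hand, the continuity argument above closes the proof.
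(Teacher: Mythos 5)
Your proposal is correct and follows essentially the same route as the paper: both proofs reduce the claim to Proposition~\ref{prop:spotcov_coupling} plus explicit $2\times 2$ algebra under the regression parametrization of $\boldsymbol{\sigma}_t$ in Equation~\eqref{eq:regression_covariance}, followed by a continuous-mapping step on the set of positive definite matrices. The only (cosmetic) difference is that the paper inverts the coupled matrices and matches the entries of $\boldsymbol{\sigma}_t^{\top}\widehat{\boldsymbol{c}}_{n,t}^{-1}\boldsymbol{\sigma}_t$ against $U_{n,t}^{-1}$, whereas you express $\widehat{T}_n$ directly as $\sqrt{k-1}\,[\widehat{M}_{n,t}]_{12}/\sqrt{\det(\widehat{M}_{n,t})}$ and pass to the inverse via the adjugate identity; your explicit attention to invertibility of $U_{n,t}$ and to the sign convention in $[M^{-1}]_{12}=-[M]_{12}/\det(M)$ is, if anything, slightly more careful than the paper's treatment.
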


\begin{proof}
    Fix $k \geq 2$ and $\lambda$. By Proposition \ref{prop:spotcov_coupling}, we have:
    \begin{equation}
        \left\| \boldsymbol{\sigma}_t^{-1}\widehat{\boldsymbol{c}}_{n, t}(\lambda)\boldsymbol{\sigma}_t^{-\top} -  U_{n,t} \right\| = o_p(1).
    \end{equation}
    Note that both $\boldsymbol{\sigma}_t^{-1}\widehat{\boldsymbol{c}}_{n, t}(\lambda)\boldsymbol{\sigma}_t^{-\top}$ and $U_{n,t}$ are positive definite matrices. Therefore, by the continuity of matrix inversion operator on the set of positive definite matrices, we have:
    \begin{equation}
        \left\| (\boldsymbol{\sigma}_t^{-1}\widehat{\boldsymbol{c}}_{n, t}(\lambda)\boldsymbol{\sigma}_t^{-1\top})^{-1} -  U_{n,t}^{-1} \right\| = o_p(1).
    \end{equation}
    This implies that each element of the matrix $(\boldsymbol{\sigma}_t^{-1}\widehat{\boldsymbol{c}}_{n, t}(\lambda)\boldsymbol{\sigma}_t^{-\top})^{-1}$ converges to the corresponding element of $U_{n,t}^{-1}$ in probability. These relations can be written in explicit forms as follows:
    \begin{align*}
        \left | \left(\frac{\nu_t}{\widehat{\nu}_{n,t}} + \frac{\nu_t(\beta_t - \widehat{\beta}_{n,t})^2}{\widehat{\varsigma}_{n,t}} \right) - [U_{n,t}^{-1}]_{11} \right | &= o_p(1) \\
        \left | \left( \frac{\nu_t^{1/2}\varsigma_t^{1/2}(\beta_t - \widehat{\beta}_{n,t})}{\widehat{\varsigma}_{n,t}} \right) - [U_{n,t}^{-1}]_{12} \right | &= o_p(1) \\
        \left | \frac{\varsigma_t}{\widehat{\varsigma}_{n,t}} - [U_{n,t}^{-1}]_{22} \right | &= o_p(1).
    \end{align*}
    Using the second and third lines, one can write:
    \begin{equation}
        \left | \frac{(\widehat{\beta}_{n,t} - \beta_t)}{\sqrt{{\varsigma}_{n,t}/{\nu}_{n,t}}} - \frac{[U_{n,t}^{-1}]_{12}}{[U_{n,t}^{-1}]_{22}} \right | = o_p(1).
    \end{equation} 
    Moreover, from all three lines, one can deduce that:
    \begin{equation}
        \left | \frac{\nu_t}{\widehat{\nu}_{n,t}} - \left([U_{n,t}^{-1}]_{11} - \frac{[U_{n,t}^{-1}]_{12}^2}{[U_{n,t}^{-1}]_{22}} \right) \right | = o_p(1).
    \end{equation}
    Finally, combining the above equations yields:
    \begin{equation}
        \left | \frac{\sqrt{k-1}(\widehat{\beta}_{n,t} - \beta_t)}{\sqrt{\widehat{\varsigma}_{n,t}/\widehat{\nu}_{n,t}}} - \frac{\sqrt{k-1}[U_{n,t}^{-1}]_{12}}{\sqrt{[U_{n,t}^{-1}]_{11}[U_{n,t}^{-1}]_{22}-[U_{n,t}^{-1}]_{12}^2}} \right | = o_p(1).
    \end{equation}
    as required.


\end{proof}

\newpage
\bibliographystyle{apalike}
\bibliography{references.bib}

\end{document}